\tikzstyle{blue dot}=[fill={rgb,255: red,128; green,128; blue,255}, draw={rgb,255: red,128; green,128; blue,255}, shape=circle, minimum size=0.15 cm, inner sep=0 pt]
\tikzstyle{red dot}=[fill={rgb,255: red,238; green,150; blue,121}, draw={rgb,255: red,238; green,150; blue,121}, shape=circle, minimum size=0.15 cm, inner sep=0 pt]
\tikzstyle{yellow dot}=[fill={rgb,255: red,252; green,199; blue,18}, draw={rgb,255: red,252; green,199; blue,18}, shape=circle, minimum size=0.15 cm, inner sep=0 pt]
\tikzstyle{black dot}=[fill=black, draw=black, shape=circle, minimum size=0.15 cm, inner sep=0 pt]
\tikzstyle{blue edge}=[-, draw={rgb,255: red,128; green,128; blue,255}, thick, fill=none]
\tikzstyle{red edge}=[-, draw={rgb,255: red,238; green,150; blue,121}, thick, fill=none]
\tikzstyle{thick white}=[-, draw=white, line width=3 pt]
\tikzstyle{blue fill}=[-, fill={rgb,255: red,128; green,128; blue,255}, draw=none, thick]
\tikzstyle{red fill}=[-, fill={rgb,255: red,238; green,150; blue,121}, draw=none, thick]
\tikzstyle{green edge}=[-, draw={rgb,255: red,123; green,194; blue,30}, thick]
\tikzstyle{black edge}=[-, fill=none, draw=black, thick]
\tikzstyle{white fill}=[-, fill={rgb,255: red,247; green,247; blue,247}, draw=none]
\tikzstyle{big dashes}=[-, dashed, fill={rgb,255: red,221; green,221; blue,221}]
\tikzstyle{yellow edge}=[-, draw={rgb,255: red,252; green,199; blue,18}, thick]
\tikzstyle{yellow fill}=[-, thick, draw=none, fill={rgb,255: red,252; green,199; blue,18}]
\newcommand{\stick}{%
	\mkern-2.5mu
	\mathchoice{}{}{\mkern0.2mu}{\mkern0.5mu}%
}
\newcommand{\comp}[3]{{#1} \circ_{#2} \stick {#3}}
\newcommand{\wcomp}[3]{{#1} \bullet_{#2} \stick {#3}}
\newcommand{\C}{\mathcal{C}}
\newcommand{\Set}{\mathsf{Set}}
\newcommand{\nto}{\Rightarrow}
\newcommand{\inr}{\mathsf{inr}}
\newcommand{\inl}{\mathsf{inl}}
\renewcommand{\phi}{\varphi}
\renewcommand{\epsilon}{\varepsilon}
\newcommand{\supp}{\mathsf{supp}}
\begin{document}
\begin{frontmatter}
  \title{Weakening and Iterating Laws Using String Diagrams} 
  \author{Alexandre Goy\thanksref{ALL}\thanksref{myemail}}	
  \address{LIP\\ ENS Lyon\\			
    France}  							
  \thanks[ALL]{The author thanks Ana Sokolova and Daniela Petri{\c s}an for numerous discussions and support, Ralph Sarkis for bringing up the topic of string diagrams, and the anonymous reviewers for their helpful comments and suggestions. This work has been supported by the French project ANR-20-CE48-0005 QuaReMe.}
   \thanks[myemail]{Email: \href{mailto:alexandre.goy@pm.me} {\texttt{\normalshape
        alexandre.goy@pm.me}}} 
\begin{abstract} 
    Distributive laws are a standard way of combining two monads, providing a compositional approach for reasoning about computational effects in semantics. Situations where no such law exists can sometimes be handled by weakening the notion of distributive law, still recovering a composite monad. A celebrated result from Eugenia Cheng shows that combining $n$ monads is possible by iterating more distributive laws, provided they satisfy a coherence condition called the Yang-Baxter equation. Moreover, the order of composition does not matter, leading to a form of associativity.

    The main contribution of this paper is to generalise the associativity of iterated composition to weak distributive laws in the case of $n = 3$ monads. To this end, we use string-diagrammatic notation, which significantly helps make increasingly complex proofs more readable. We also provide examples of new weak distributive laws arising from iteration.
\end{abstract}
\begin{keyword}
  monad, distributive law, weak distributive law, iterated distributive law, Yang-Baxter equation, string diagram
\end{keyword}
\end{frontmatter}

\section{Introduction} \label{sec:intro}

Monads have been very successful in computer science to model computational effects~\cite{Moggi91,Plotkin02,Wadler95}. In this context, being able to compose monads is of central importance. The primary framework for monad composition is undoubtedly Beck's celebrated theory of distributive laws~\cite{Beck69}. In practice, finding distributive laws between monads is hard, although some recipes are known to build them~\cite{Manes07,Manes08}. In contrast, once a law is found, it often contains a significant semantic content -- thus, it is common to find papers entirely dedicated to describing the construction and the consequences of a distributive law, see \textit{e.g.}~\cite{Jacobs21}. More generally, monad composition is still a very active field in the computer science community nowadays, with possible applications in coalgebra~\cite{Silva10,Jacobs15,Silva21,Bonchi21b} or in the semantics of computation~\cite{Parlant18,Parlant20}.

With the help of a distributive law, one can compose two monads. Composing more monads can be performed by \emph{iterating}~\cite{Cheng11a}, \textit{i.e.}, by using more distributive laws -- yet, these must meet coherence axioms called the~\emph{Yang-Baxter} equations. While distributive laws are the first step towards monad compositionality, \emph{iterated} distributive laws are all the following steps.

Although category theory is a language of choice to work on compositionality properties, there is one major obstacle in the field of distributive laws: monads may not always compose. And consequently, distributive laws may not always exist. The literature exhibits many such negative results, called \emph{no-go theorems}~\cite{Varacca06,Klin18,Dahlqvist18,Zwart20,Salamanca20,Zwart22}. A possible fix consists in \emph{weakening} the notion of distributive law to recover transformations that, even if not satisfying all the usual axioms, still enable some weak kind of monad composition. This simple idea has received substantial coverage in the computer science community since the paper of Garner~\cite{Garner20}. Simplifying earlier work from~\cite{Street09,Bohm10,Bohm11}, Garner proposed to delete the one axiom that was obstructing, in many concrete cases, the existence of a law. Several papers followed on this revived notion of \emph{weak distributive law}~\cite{Goy20,Goy21,Bonchi21,Sarkis21}. In general, finding examples of weak distributive laws remains difficult, and very few techniques are known to produce them. Moreover, for such a law to be relevant, it needs to fit into the sweet spot where no distributive law can exist, but there is still enough structure for a weak one. This situation arises in a handful of known cases, \textit{e.g.}, when distributing the distribution monad over the powerset monad~\cite{Goy20} or the powerset monad over itself~\cite{Garner20,Goy21}. 

Iterating weak distributive laws is a promising target: one would enable building new weak distributive laws out of some existing ones at low cost. However, two factors are now interacting with the basic theory of Beck.  The proof of the iteration theorem~\cite[Appendix~A]{Cheng11a}, due to Eugenia Cheng, is not very difficult per se, but, already for the simplest case of three monads, it contains large commutative diagrams that are hard to comprehend in one glimpse. On the other hand, although weak distributive laws are conceptually close to distributive laws, they make the framework grow in complexity. Manipulating weak laws involves manipulating more natural transformations arising from splitting of idempotents, making proofs less transparent. In short, \emph{iterating} adds one layer of complexity, \emph{weakening} adds another layer, and computations become untractable using the traditional tools of category theory. For a similar situation, consider the paper of Winter~\cite{Winter16} whose main result is to prove a variation of Cheng's theorem in the case of three monads. In a nutshell, the variation consists in modifying the type of \emph{one} of the laws from $TF \nto FT$ into $TF \nto FST$ -- with, of course, consequences for the coherence conditions. Despite this seemingly mild modification, one of the required commutative diagrams becomes so large that finding the appropriate tiling required Winter to use automation via a Prolog program, mainly because of the explosion in the number of naturality squares. As he insists on, the main difficulty is about finding the proof -- verifying it remains straightforward.

In the present paper, we aim to illustrate yet another case where \emph{string diagrams} bring conceptual clarity and simplicity. Under the name string diagrams, there are many graphical calculi based on topological operations such as string bending, dragging, and sliding. Such frameworks have raised more and more interest over the past years: we can mention, \textit{e.g.}, the book of Coecke and Kissinger~\cite{Coecke17} and the line of work of Bonchi, Zanasi \textit{et al.}~\cite{Zanasi15,Zanasi15b,Zanasi17,Zanasi18,Zanasi21,Zanasi22}. String diagrams are nowadays recognised as a helpful tool. Inside published papers, commutative diagrams have long been preferred, even when they are less enlighting (see \textit{e.g.}~\cite[Propositions~2.3,~3.3,~and~4.1]{Street09}). One of the aims of this paper is to emphasise the following point: when it comes to concepts that rely on several complexity layers, such as \emph{iterating} plus \emph{weakening}, string diagrams may even be \emph{required} as a computation-assisting technology. The string diagrams we will make use of are those introduced in the style of Hinze and Marsden~\cite{Marsden14,Hinze16,Hinze16b}. These were already used to produce new results within the theory of distributive laws by Zwart~\cite{Zwart20} and in the author's PhD thesis~\cite{GoyTh}. There seems to be also some work in progress concerning no-go theorems~\cite{Shah22}, showing that string diagrams are imposing themselves in the area to visualise categorical transformations, provide elegant proofs, and communicate ideas.

\paragraph*{Contributions.}
This paper contains contributions of two kinds: technical (laws) and practical (strings). On the practical side, the paper introduces in Section~\ref{sec:laws} a convenient string-diagrammatic notation that helps manipulate concepts at play. In particular, we introduce a new notation for idempotents of the form $\kappa \colon ST \nto ST$ in equation~\eqref{eq:kappa_notations}, which is central in the whole development. The main technical contribution of the paper consists in generalising Cheng's Theorem~\ref{theo:cheng} to weak distributive laws in the case of $n = 3$ monads. For clarity, this result is split into three distinct statements: Theorem~\ref{theo:first_comp}, Theorem~\ref{theo:second_comp}, and Theorem~\ref{theo:comp_equal}. Along the way, and building on a key technical tool which is Lemma~\ref{lem:four_eq}, we derive several results of the theory of weak distributive laws, some of them already presented in~\cite{GoyTh}, some of them unpublished. In Section~\ref{subsec:examples}, we also provide four families of examples where our main result applies: trivial weak distributive laws, and generic distributive laws involving the exception monad, the reader monad, and the writer monad. Finally, in Section~\ref{subsec:outputs}, we give a new perspective on algebras for a monad by identifying them as distributive laws (Proposition~\ref{prop:output_1}). In this vision, algebras for a distributive law are identified as iterated distributive laws (Proposition~\ref{prop:output_2}).

\paragraph*{Related Work.} The content of the present paper is partly based on the author's PhD thesis~\cite{GoyTh}, with a slightly different perspective. In~\cite{GoyTh} the focus is put on the correspondence weak distributive laws - weak liftings - weak extensions. In contrast, in the present paper, we instead follow the approach of~\cite[Section~2]{Street09} consisting in building the weak composite monad directly out of the weak distributive law. It has to be mentioned that iterated distributive laws have already been studied from a string-diagrammatic perspective in a paper from Hinze and Marsden~\cite{Hinze16b}. 

\section{Background} \label{sec:preli}

\subsection{Preliminaries and Notation}

Definitional equality is denoted by $\triangleq$. We assume basic knowledge of category theory, including: category, functor, natural transformation. Let $\C$ be a category and $S$, $T$ be endofunctors on $\C$. We use the Lie bracket notation $[S,T]$ as a shortcut for the type of natural transformations $TS \nto ST$. For example, $\lambda \colon [S,T]$ means that $\lambda$ is a natural transformation of type $TS \nto ST$. This notation is convenient because $(a)$ laws $\lambda : [S,T]$ are related to composite monads $\comp{S}{}{T}$ and $(b)$ the type $TS \nto ST$ looks like a logical version of the formal expression $-TS + ST$. Identity morphisms will be denoted by $1_X \colon X \to X$, the identity functor by $1 \colon \C \to \C$, and the identity natural transformation on a functor $F$ by $1_F \colon F \nto F$.

In the whole paper we assume that $\C$ is idempotent complete, that is, for every morphism $k \colon X \to X$ in $\C$ such that $k \circ k = k$, there are morphisms $p \colon X \to Y$ and $i \colon Y \to X$ such that $i \circ p = k$ and $p \circ i = 1_Y$. A direct consequence is that for any functor $F \colon \C \to \C$, every idempotent natural transformation $\kappa \colon F \nto F$ splits. Explicitly, there are a functor $K$ and natural transformations $\pi \colon F \nto K$, $\iota \colon K \nto F$ such that $\iota \circ \pi = \kappa$ and $\pi \circ \iota = 1_K$. All the forecoming examples live in the category $\Set$ of sets and functions, which is idempotent complete. So are, \textit{e.g.}, toposes and the category of compact Hausdorff spaces~\cite{Goy21}. We globally impose idempotent completeness to be able to split natural transformations denoted by $\kappa$ in the sequel -- all our results remain valid by only asserting locally that these $\kappa$ split.

\subsection{Introducing String Diagrams}

We strongly encourage the reader to refer to~\cite[Sections~2 and~3]{Hinze16} for a thorough introduction to string diagrams. In this short section, we will nonetheless provide a crash course in string diagrams, mainly to fix notations.

The following example sums up all useful notation. On the left part, we provide several basic natural transformations $\lambda \colon TS \nto ST$, $\nu \colon ST \nto U$, $\epsilon \colon U \nto 1$, $\eta \colon 1 \nto T$. On the right part, we provide two equivalent string diagrams representing the same composite natural transformation $\nu \circ \lambda\epsilon \circ \eta SU = \nu \circ \lambda \circ \eta S \circ S\epsilon \colon SU \nto U$.
\begin{equation} \label{eq:example_diagram}
\tikzfig{0_example}
\end{equation}

We use the following convention: diagrams are read from right to left and from bottom to top (note that~\cite{Hinze16} rather uses top to bottom). A functor $F \colon \C \to \mathcal{D}$, represented by a vertical string, delineates two regions: the region on the right represents the category $\C$, while the region on the left represents the category $\mathcal{D}$. In what follows, we only study endofunctors in $\C$\footnote{To be precise, we are using the standard formalism of string diagrams for monoidal categories~\cite{Joyal91} within the category of $\C$-endofunctors, where monoidal product is given by functor composition.}. Therefore all delineated regions remain blank and represent the category $\C$. The identity functor $1 \colon \C \to \C$ is not depicted. A natural transformation $\alpha \colon F \nto G$ is depicted by a symbol between functor $F$ (below) and functor $G$ (above). Using colours, we will define on-the-fly unambiguous symbols (typically, nodes) to distinguish between different natural transformations -- see equation~\eqref{eq:example_diagram} for a first example. The identity natural transformation $1_F \colon F \nto F$ is a string with no symbol, \textit{i.e.}, it coincides with the representation of the functor $F \colon \C \to \C$.

Let $F$, $G$, $H$, $K$ be functors, and $\alpha \colon F \nto G$, $\beta \colon G \nto H$, $\gamma \colon H \nto K$ be natural transformations. Composition of functors is denoted by $GF$ and depicted by horizontal juxtaposition of strings. Vertical composition of natural transformations is denoted by $\beta \circ \alpha \colon F \nto H$, defined by $(\beta \circ \alpha)_X \triangleq \beta_X \circ \alpha_X$ for all objects $X$, and depicted by vertical glueing of string diagrams. Left (respectively right) composition of a functor and a natural transformation is denoted by $H\alpha \colon HF \nto HG$ (respectively $\alpha H \colon FH \nto GH$), defined by $(H\alpha)_X \triangleq H(\alpha_X)$ (respectively $(\alpha H)_X \triangleq \alpha_{HX}$) for all objects $X$, and depicted by horizontal juxtaposition of string diagrams. Horizontal composition of natural transformations is denoted by $\gamma \alpha \colon HF \nto KG$, defined by $\gamma \alpha \triangleq K\alpha \circ \gamma F = \gamma G \circ H \alpha$, and depicted by horizontal juxtaposition of string diagrams.

Each string diagram represents a natural transformation of some type. Two string diagrams of the same type are identified up to continuous operations such as dragging nodes and bending strings (provided it does not reverse the implicit bottom-up vertical direction). Discontinuous operations include sliding a node past another on the same string, crossing two strings, and deleting or adding strings or nodes. Such operations may be allowed punctually in the presence of a natural transformation that justifies them. For instance, a natural transformation $\lambda \colon [S,T]$, having type $TS \nto ST$, represents an \emph{ad hoc} rule to cross the two strings representing $S$ and $T$. String-diagrammatic notation is coherent with the usual laws of category theory in the sense that two identified string diagrams always denote the same natural transformation.

It is customary, as done in~\cite{Hinze16}, to specify the types of the functors below and above each string diagram. In this paper, only a few functors are in play, so that our colour code would be sufficient to infer all types. For the sake of clarity, we opt for some redundancy by always indicating types.

\section{A String-Diagrammatic Theory of Weak Distributive Laws} \label{sec:laws}

The present section introduces weak distributive laws in string-diagrammatic style, starting back from the standard theory of monads and distributive laws.

\subsection{Monads and Distributive Laws}

\begin{definition} A \emph{monad} is a triple $(T,\eta^T,\mu^T)$ comprising a $\C$-endofunctor $T$ and two natural transformations, the \emph{unit} $\eta^T \colon 1 \nto T$ and the \emph{multiplication} $\mu^T \colon TT \nto T$, such that $\mu^T \circ \eta^T T = 1_T = \mu^T \circ T \eta^T $ (unitality) and $\mu^T \circ \mu^T T = \mu^T \circ T \mu^T$ (associativity).
\end{definition}

String diagrammatically, these data are denoted by
\begin{equation}
    \tikzfig{1_monad}
\end{equation}
For the sake of readability, we may denote a monad $(T,\eta^T,\mu^T)$ simply by $T$. Let us give some examples of monads in the category $\Set$.

\begin{example} \label{ex:identity}
    The \emph{identity} monad consists of the identity functor $1$ with both the unit and the multiplication being the identity natural transformation $1 \nto 1$. 
\end{example}

\begin{example} \label{ex:exception}
    The \emph{exception monad} $E$ is defined as follows. Let $1 = \{*\}$ be a singleton set. For any set $X$, let $\inl_X \colon X \to X + 1$ and $\inr_X \colon 1 \to X + 1$ be the canonical injections. The functor $E$ maps a set $X$ to the set $EX \triangleq X + 1$ and a function $f \colon X \to Y$ to $Ef \colon X+1 \to Y+1$, defined by $Ef(\inl_X(x)) \triangleq \inl_Y(f(x))$ and $Ef(\inr_X(*)) \triangleq \inr_Y(*)$. The unit is $\eta^E \triangleq \inl \colon 1 \nto E$ and the multiplication $\mu^E \colon EE \nto E$ is given by merging exceptions, \textit{i.e.}, $\mu^E_X(\inl_{X+1}(z)) \triangleq z$ for all $z \in X + 1$ and $\mu^E_X(\inr_{X+1}(*)) \triangleq \inr_X(*)$. 
\end{example}

\begin{example} \label{ex:reader}
    The \emph{reader monad} $R$ is defined as follows. Let $A$ be a set of labels. The functor $R$ maps a set $X$ to the set $RX \triangleq X^A$ and a function $f \colon X \to Y$ to $Rf \colon X^A \to Y^A$, defined by $Rf(h) \triangleq f \circ h$ for all $h \in X^A$. The unit $\eta^R \colon 1 \nto R$ produces a constant function $\eta^R_X(x) \triangleq (a \mapsto x)$. The multiplication $\mu^R \colon RR \nto R$ duplicates the input, \textit{i.e.}, $\mu^R_X(h) \triangleq (a \mapsto h(a)(a))$ for all $h \in (X^A)^A$.
\end{example}

\begin{example} \label{ex:writer}
    The \emph{writer monad} $W$ is defined as follows. Let $(M,\cdot,e)$ be a monoid. The functor $W$ maps a set $X$ to the set $WX \triangleq M \times X$ and a function $f \colon X \to Y$ to $Wf \colon M \times X \to M \times Y$, defined by $Wf(m,x) \triangleq (m,f(x))$. The unit $\eta^W \colon 1 \nto W$ outputs the monoid unit $\eta^W_X(x) \triangleq (e,x)$. The multiplication $\mu^W \colon WW \nto W$ implements the monoid multiplication, \textit{i.e.}, $\mu^W_X(m,(n,x)) \triangleq (m \cdot n,x)$. 
\end{example}

\begin{example} \label{ex:powerset}
    The \emph{powerset monad} $P$ is defined as follows. The functor $P$ maps a set $X$ to the set of its subsets $PX$ and a function $f\colon X \to Y$ to its direct image $Pf\colon PX \to PY$, defined by $Pf(U) \triangleq \{f(x) \mid x \in U\}$ for all $U \subseteq X$. The unit $\eta^P \colon 1 \nto P$ produces singletons $\eta^P_X(x) \triangleq \{x\}$. The multiplication $\mu^P \colon PP \nto P$ computes unions, \textit{i.e.}, $\mu^P_X(\mathcal{U}) \triangleq \bigcup \mathcal{U}$ for all $\mathcal{U} \in PPX$.
\end{example}

\begin{example} \label{ex:distribution}
    The \emph{distribution monad} $D$ is defined as follows. The functor $D$ maps a set $X$ to the set of all finitely-supported probability distributions on $X$, \textit{i.e.},
	\begin{equation}
    DX \triangleq \left\{ p \colon X \to [0,1] \mid \, \sum_{x \in X} p(x) = 1, \, \supp_X(p) \text{ is finite} \right\}
  \end{equation}
  where $\supp_X(p) = \{x \in X \mid p(x) > 0\}$. For a $p \in DX$, we may use the formal sum notation $p = \sum_{x \in X} p_x \cdot x$, where $p_x \triangleq p(x)$. A function $f \colon X \to Y$ is mapped to $Df \colon DX \to DY$, defined by $Df(\sum_{x \in X} p_x \cdot x) \triangleq \sum_{x \in X} p_x \cdot f(x)$. The unit $\eta^D \colon 1 \nto D$ produces Dirac distributions $\eta^D_X(x) \triangleq 1 \cdot x$. The multiplication $\mu^D \colon DD \nto D$ is a weighted average, defined by
  $\mu^D_X(q) \triangleq \sum_{p \in DX} q_p p_x \cdot x$ for all $q \in DDX$.
\end{example}

\begin{definition} Given two monads $S$, $T$ on $\C$, a \emph{distributive law} is a $\lambda \colon [S,T]$ such that the four following axioms hold:
\begin{align}
    & \lambda \circ T \eta^S = \eta^S T \label{ax:etap} \tag{$\eta^+$} \\
    & \lambda \circ T \mu^S = \mu^S T \circ S \lambda \circ \lambda S \label{ax:mup} \tag{$\mu^+$} \\
    & \lambda \circ \eta^T S = S \eta^T \label{ax:etam} \tag{$\eta^-$} \\
    & \lambda \circ \mu^T S = S \mu^T \circ \lambda T \circ T \lambda \label{ax:mum} \tag{$\mu^-$}
\end{align}
\end{definition}
String diagrammatically, the monads $S$, $T$, and the distributive law $\lambda \colon [S,T]$ are pictured as
\begin{equation}
    \tikzfig{1_dlaw_def}
\end{equation} 
and the distributive law axioms are 
\begin{equation}
    \tikzfig{1_dlaw_axioms}
\end{equation}
Any distributive law $\lambda \colon [S,T]$ yields a monad $\comp{S}{\lambda}{T} \triangleq (ST,\eta^S \eta^T,\mu^S \mu^T \circ S\lambda T)$, also denoted by $\comp{S}{}{T}$ when no confusion can arise, and pictured as
\begin{equation} \label{sd:composite_monad}
    \tikzfig{1_composite_monad}
\end{equation}

\subsection{Weak Distributive Laws}

\begin{definition}
Given two monads $S$, $T$ on $\C$, a \emph{weak distributive law} is a $\lambda \colon [S,T]$ such that the axioms~\eqref{ax:etap}, \eqref{ax:mup} and \eqref{ax:mum} hold.
\end{definition}

In this definition, the weakening comes from dropping the \eqref{ax:etam} axiom, as done in~\cite{Garner20}. There are other possible weakenings that we briefly discuss in Remark~\ref{rem:other_weakenings}. An interesting natural transformation emerges from the structure of a weak distributive law. Consider 
\begin{equation} \label{eq:kappa_def}
\kappa \triangleq \begin{tikzcd}
  ST \arrow[r, "\eta^T ST", Rightarrow] & TST \arrow[r, "\lambda T", Rightarrow] & STT \arrow[r, "S\mu^T", Rightarrow] & ST
  \end{tikzcd}
\end{equation}

If $\lambda$ were a distributive law, then $\kappa$ would be the identity. In the weak case though, $\kappa$ still is an idempotent that is compatible with units and multiplications in the following sense.
\begin{lemma} \label{lem:four_eq}
The natural transformation $\kappa$ satisfies the following equations.
\begin{align}
    & \kappa \circ \kappa = \kappa \label{eq:kappa_idem} \\ 
    & \kappa \circ \lambda = \lambda \label{eq:kappa_lambda} \\ 
    & \kappa \circ \eta^S \eta^T = \eta^S \eta^T \label{eq:kappa_eta} \\ 
    & \kappa \circ \mu^S \mu^T \circ S \lambda T = \mu^S \mu^T \circ S\lambda T \circ \kappa\kappa \label{eq:kappa_mu}
\end{align}
\end{lemma}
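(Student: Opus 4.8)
The plan is to prove the four identities in the order \eqref{eq:kappa_lambda}, \eqref{eq:kappa_idem}, \eqref{eq:kappa_eta}, \eqref{eq:kappa_mu}, treating \eqref{eq:kappa_lambda} as the engine driving the rest. To establish $\kappa \circ \lambda = \lambda$, I would first slide the unit $\eta^T$ in the definition \eqref{eq:kappa_def} of $\kappa$ past $\lambda$ by naturality, rewriting $\eta^T ST \circ \lambda = T\lambda \circ \eta^T TS$. The two crossings then appear as the pattern $S\mu^T \circ \lambda T \circ T\lambda$, which is exactly the right-hand side of \eqref{ax:mum}, so it collapses to $\lambda \circ \mu^T S$; a final application of the unit law $\mu^T \circ \eta^T T = 1_T$ (whiskered by $S$) removes the remaining cap, leaving $\lambda$. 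Diagrammatically this is just: create a $T$-cap, drag it through the crossing so that two crossings sit on top of each other, fuse them via \eqref{ax:mum}, and cancel the cap against the multiplication.

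Both \eqref{eq:kappa_idem} and \eqref{eq:kappa_eta} then follow quickly. For idempotence I would first record the compact form $\kappa = S\mu^T \circ \sigma T$ with $\sigma \triangleq \lambda \circ \eta^T S \colon S \nto ST$ (indeed $\sigma T = \lambda T \circ \eta^T ST$). Naturality of $\sigma$ with respect to $\mu^T$ gives $\sigma T \circ S\mu^T = ST\mu^T \circ \sigma TT$, and associativity of $\mu^T$ then turns $\kappa \circ \kappa$ into $S\mu^T \circ \kappa T \circ \sigma T = S\mu^T \circ (\kappa \circ \sigma)T$. Since $\kappa \circ \sigma = \kappa \circ \lambda \circ \eta^T S = \lambda \circ \eta^T S = \sigma$ by \eqref{eq:kappa_lambda}, this is just $S\mu^T \circ \sigma T = \kappa$. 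For \eqref{eq:kappa_eta} I would slide the $\eta^T$ of $\kappa$ past $\eta^S\eta^T$ by naturality, use \eqref{ax:etap} (that is $\lambda \circ T\eta^S = \eta^S T$) to absorb the resulting crossing, and finish with naturality of $\eta^S$ and a unit law of $T$.

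The genuinely hard identity is \eqref{eq:kappa_mu}, compatibility of $\kappa$ with the composite multiplication $m \triangleq \mu^S\mu^T \circ S\lambda T$. My strategy is to split the horizontal composite as $\kappa\kappa = ST\kappa \circ \kappa ST$ and prove two one-sided statements, namely (R) $m \circ ST\kappa = m$ and (L) $m \circ \kappa ST = \kappa \circ m$. Granting these, the identity follows formally, since $m \circ \kappa\kappa = m \circ ST\kappa \circ \kappa ST = m \circ \kappa ST = \kappa \circ m$, using (R) and then (L). Statement (R) is the clean one: expanding $ST\kappa$, an interchange move separates the inner $\mu^T$ from the crossing; two crossings combine into the $\lambda T \circ T\lambda$ pattern; \eqref{ax:mum} together with associativity of $\mu^T$ rewrites this so that the $\eta^T$-cap of $\kappa$ comes to sit directly under a $\mu^T$ and cancels by a unit law, and what remains is exactly $m$. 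In other words, the right-hand copy of $\kappa$ is simply absorbed.

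The main obstacle is (L), where the left-hand $\kappa$ is \emph{not} absorbed but must be transported through $m$ to re-emerge as the outer $\kappa$. Here the $\eta^T$-cap created on the far left has to cross \emph{both} copies of $S$ before they are merged by $\mu^S$, and I expect to need \eqref{ax:mup} (compatibility of $\lambda$ with $\mu^S$) to convert ``cross over $S$ twice, then merge the two $S$'s'' into ``merge the $S$'s, then cross once'', with \eqref{ax:mum} and naturality handling the $T$-side bookkeeping. This is precisely the step where the number of naturality squares explodes in the commutative-diagram presentation and where the string-diagram slides make the computation transparent. Note also that, because $(\eta^-)$ has been dropped, the transported cap does not collapse back to a unit: this is exactly why (L) genuinely differs from (R) and why $\kappa$ is a non-trivial idempotent in the first place.
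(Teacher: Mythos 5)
Your proposal is correct, but note that the paper itself offers no proof of Lemma~\ref{lem:four_eq}: the equations are stated and immediately recast string-diagrammatically, with the proofs deferred to the cited literature, so there is no in-paper argument to compare against. Checking your argument step by step: the derivation of \eqref{eq:kappa_lambda} (naturality of $\eta^T$, then \eqref{ax:mum}, then the unit law $\mu^T \circ \eta^T T = 1_T$ whiskered by $S$) is exactly right; the reduction of \eqref{eq:kappa_idem} to \eqref{eq:kappa_lambda} via the compact form $\kappa = S\mu^T \circ \sigma T$ with $\sigma = \lambda \circ \eta^T S$ is correct, as is the sketch of \eqref{eq:kappa_eta}. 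For \eqref{eq:kappa_mu}, your decomposition into (R) $m \circ ST\kappa = m$ and (L) $m \circ \kappa ST = \kappa \circ m$, where $m \triangleq \mu^S\mu^T \circ S\lambda T$, is sound, and the formal combination $m \circ \kappa\kappa = m \circ ST\kappa \circ \kappa ST = m \circ \kappa ST = \kappa \circ m$ is valid by the interchange law. The only thin spot is (L), which you state as an expectation rather than a computation; I verified it does go through with precisely the ingredients you name: expand the pattern $S\lambda T \circ S\mu^T ST$ by \eqref{ax:mum}, slide the $\eta^T$-cap around by naturality, collapse the resulting pattern $\mu^S T \circ S\lambda \circ \lambda S$ by \eqref{ax:mup}, and the cap re-emerges as the outer $\kappa$, as you predict. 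Your closing observation, that the absence of \eqref{ax:etam} is exactly what prevents the left-hand cap from collapsing and thus makes (L) genuinely different from (R), is the right conceptual point. A bonus of your organisation: the one-sided identities (R) and (L) are strictly more informative than \eqref{eq:kappa_mu} itself, and they are the facts implicitly used in the paper's Appendix~\ref{app:laws} computations (absorption of a single $\kappa$ by the composite multiplication), so proving them separately is arguably a better structuring than attacking \eqref{eq:kappa_mu} head-on.
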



As $\kappa \colon ST \nto ST$ is an idempotent natural transformation and we assume $\C$ to be idempotent complete, there are a $\C$-endofunctor $K$ and two natural transformations $\pi \colon ST \nto K$ and $\iota \colon K \nto ST$ such that $\iota \circ \pi = \kappa$ and $\pi \circ \iota = 1_K$. Using these, one can build a monad that is to be interpreted as a weak composition of $S$ and $T$, namely $S \bullet_\lambda T = (K,\pi \circ \eta^S\eta^T,\pi \circ \mu^S\mu^T \circ S\lambda T \circ \iota\iota)$, also denoted by $S \bullet T$ when the context is clear. See Appendix~\ref{app:laws} for a proof that this is indeed a monad. We now introduce some string diagrammatic notation that will make computations practicable:
\begin{equation} \label{eq:kappa_notations}
\tikzfig{1_kappa_notations}
\end{equation} 

String diagrammatically, the equations from Lemma~\ref{lem:four_eq} can be represented as below.
\begin{equation}
\tikzfig{1_kappa_equations}
\end{equation}

The composite monad $S \bullet T$ is, in turn, as follows -- compare with diagrams \eqref{sd:composite_monad} for $\comp{S}{}{T}$:
\begin{equation}
\tikzfig{1_wcomposite_monad}
\end{equation}

\begin{example} \label{ex:trivial_wdl}
    Let $\gamma \colon S \nto T$ be a monad morphism, that is, a natural such that $\gamma \circ \eta^S = \eta^T$ and $\gamma \circ \mu^S = \mu^T \circ \gamma\gamma$. Then $\eta^S\mu^T \circ T\gamma \colon [S,T]$ is a weak distributive law -- and, in general, not a distributive law. Such distributive laws are called \emph{trivial}, because $\wcomp{S}{}{T} = T$~\cite[Proposition~2.12]{GoyTh}. In diagrams:
    \begin{equation}
    \tikzfig{1_monad_morphism}    
    \end{equation}
    We provide two concrete examples:
    \begin{itemize}
        \item The identity monad morphism $1_P \colon P \nto P$ yields a trivial weak distributive law $\eta^P \mu^P \colon [P,P]$ defined by $\mathcal{U} \mapsto \left\{ \bigcup \mathcal{U} \right\}$ for any subset of subsets $\mathcal{U}$.
        \item The support monad morphism $\supp \colon D \nto P$ yields a trivial weak distributive law $\eta^D\mu^P \circ P\supp \colon [D,P]$ defined by $U \mapsto 1 \cdot \bigcup_{p \in U} \supp_X(p)$ for any subset of distributions $U \in PDX$. 
    \end{itemize}
\end{example}

\begin{example}
    Given a monad $T$ and a natural transformation $\alpha \colon T \nto T$, one can view $\alpha$ as having type $[1,T]$ or $[T,1]$, where $1$ is the identity monad described in Example~\ref{ex:identity}. Table~\ref{table:1T} describes how axioms of distributive laws can then be satisfied by $\alpha$.
    \begin{table}[htp]
    \caption{Required axioms for the law $\alpha$}
    \begin{center}
    \begin{tabular}{ccc}
    \hline 
    axiom & $\alpha \colon [1,T]$ & $\alpha \colon [T,1]$ \\
    \hline
    \eqref{ax:etap} & $\alpha = 1$ & $\alpha \circ \eta^T = \eta^T$ \\
    \eqref{ax:etam} & $\alpha \circ \eta^T = \eta^T$ & $\alpha = 1$ \\ 
    \eqref{ax:mup} & $\alpha = \alpha \circ \alpha$ & $\alpha \circ \mu^T = \mu^T \circ \alpha\alpha$ \\ 
    \eqref{ax:mum} & $\alpha \circ \mu^T = \mu^T \circ \alpha\alpha$ & $\alpha = \alpha \circ \alpha$ \\
    \hline 
    \end{tabular} \label{table:1T}
    \end{center}
    \end{table}
    One can infer that $\alpha \colon [1,T]$ is a weak distributive law if and only if $\alpha$ is the identity, whereas $\alpha \colon [T,1]$ is a weak distributive law exactly when $\alpha$ is an idempotent monad morphism. For example, given any monad morphism $\gamma \colon T \nto 1$ (that is, a natural transformation which is pointwise Eilenberg-Moore), the corresponding trivial weak distributive law is $\eta^T \circ \gamma \colon [T,1]$ which is indeed an idempotent monad morphism. We provide two concrete examples:
    \begin{itemize}
        \item The map $\mathsf{proj}^2 \colon W \nto 1$ from the writer monad, defined by $\mathsf{proj}^2_X(m,x) \triangleq x$ for any $(m,x) \in WX$, is a monad morphism. Therefore $\eta^W \circ \mathsf{proj}^2 \colon [W,1]$, defined by $(m,x) \mapsto (e,x)$, is a weak distributive law.
        \item Given any letter $a \in A$, the map $\hat{a} \colon R \nto 1$ from the reader monad, defined by $\hat{a}_X(h) \triangleq h(a)$ for any $h \in RX$, is a monad morphism. Therefore $\eta^R \circ \hat{a} \colon [R,1]$, defined by $h \mapsto (b \mapsto h(a))$, is a weak distributive law.
    \end{itemize}
\end{example}

We provide two non-trivial examples of weak distributive laws, both of them arising from situations when no distributive law of their type exists~\cite{Klin18,Varacca06}.
\begin{example} \label{ex:law_pd}
    The natural transformation $\lambda \colon [P,D]$ defined for any distribution of subsets $p = \sum_i p_i \cdot U_i$ by 
    \begin{equation} 
        \lambda_X(p) = \left\{ \sum_i p_i q^i \mid q^i \in DX, \, \supp(q^i) \subseteq U_i\right\} \label{eq:pd_def}
    \end{equation}
    is the unique monotone weak distributive law of its type -- see~\cite{Goy20} 
    for a detailed study.
\end{example}

\begin{example} \label{ex:law_pp}
    The natural transformation $\lambda \colon [P,P]$ defined for any subset of subsets $\mathcal{U}$ by 
    \begin{equation} 
        \lambda_X(\mathcal{U}) = \left\{ V \subseteq \bigcup \mathcal{U} \mid \forall U \in \mathcal{U}, V \cap U \neq \emptyset \right\} \label{eq:pp_def}
    \end{equation}
    is the unique monotone weak distributive law of its type -- see~\cite{Garner20,Goy21}
    for more details.
\end{example}

One can see that the theory of weak distributive laws is very similar to Beck's theory of distributive laws~\cite{Beck69} \emph{modulo} the transformations $\pi$ and $\iota$. For example, if $\lambda \colon [S,T]$ is a distributive law, then the classical theory shows that $S\eta^T \colon S \nto \comp{S}{\lambda}{T}$ and $\eta^S T \colon T \nto \comp{S}{\lambda}{T}$ are monad morphisms. Similarly, if $\lambda \colon [S,T]$ is only weak, then $\pi \circ S\eta^T \colon S \nto \wcomp{S}{\lambda}{T}$ and $\pi \circ \eta^S T \colon T \nto \wcomp{S}{\lambda}{T}$ are monad morphisms. Whenever the base category is idempotent complete, as is assumed in this paper, the renowned correspondence distributive laws - extensions - liftings also generalises (see~\cite{Garner20}).

\begin{remark} \label{rem:other_weakenings}
    Deleting the~\eqref{ax:etam} axiom is one of the many ways of weakening the notion of distributive law. In recent years, this approach has been very fruitful for computer science applications~\cite{Goy20,Goy21,GoyTh,Bonchi21}. One may ask, symmetrically, what happens by deleting the~\eqref{ax:etap} axiom instead. This idea gives rise to the notion of \emph{coweak distributive law} developed by the author in~\cite{GoyTh}. Coweak distributive laws enjoy many symmetries with respect to weak distributive laws, but no non-trivial examples of them are known. Yet another way of weakening the notion of distributive law is the one of Street~\cite{Street09}, who suggests keeping a weakened version of both the~\eqref{ax:etap} and the~\eqref{ax:etam} axiom. The presentation of the present paper -- including the string diagrammatic approach -- is greatly inspired by the one of Street-weak distributive laws in~\cite{Street09}. A broader, $2$-categorical account of distributive law weakenings can be found in~\cite{Bohm10,Bohm11}.
\end{remark}

\section{Iterating Laws} \label{sec:iteration}

\subsection{Adapting Cheng's Theorem}

In this section, we are interested in what happens when combining more laws. For the sake of simplicity, we will only consider the case when three laws are in play. In the whole section, let $R$, $S$, $T$ be monads
\begin{equation}
    \tikzfig{2_dlaw_def}
\end{equation}
and $\lambda \colon [S,T]$, $\sigma \colon [R,S]$, $\tau \colon [R,T]$ be natural transformations such that the following equation holds.
\begin{equation} \label{eq:yb} \tag{YB}
    \sigma T \circ S\tau \circ \lambda R = R\lambda \circ \tau S \circ T\sigma 
\end{equation}    
Equation~\eqref{eq:yb}, known as the \emph{Yang-Baxter equation}, intuitively states that these three natural transformations are coherent with each other. Diagrammatically, it means that when performing the three crossings in a row, their order does not matter:
\begin{equation}
    \tikzfig{2_dlaw_def2}
\end{equation}

The result of Cheng~\cite[Theorem 1.6]{Cheng11a} can be rephrased as follows in the case of three monads. 

\begin{theorem}[Cheng] \label{theo:cheng}
    Let $R$, $S$, $T$ be monads and $\lambda \colon [S,T]$, $\sigma \colon [R,S]$, $\tau \colon [R,T]$ be distributive laws satisfying the Yang-Baxter equation.
    Then
    \begin{itemize}
        \item $R\lambda \circ \tau S \colon [\comp{R}{}{S},T]$ is a distributive law,
        \item $\sigma T \circ S \tau \colon [R,\comp{S}{}{T}]$ is a distributive law,
        \item these two laws generate the same monad, \textit{i.e.}, $\comp{(\comp{R}{}{S})}{}{T} = \comp{R}{}{(\comp{S}{}{T})}$.
    \end{itemize}
\end{theorem}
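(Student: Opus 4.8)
The plan is to treat the three bullet points in order, drawing every natural transformation as a string diagram so that $\sigma$, $\tau$, $\lambda$ appear as three crossings and the Yang--Baxter equation~\eqref{eq:yb} becomes the visual rule that lets one crossing slide past another. Throughout I would use freely the naturality of all the transformations, the monad axioms of $R$, $S$, $T$, and the fact that each of $\sigma$, $\tau$, $\lambda$ satisfies all four axioms~\eqref{ax:etap}--\eqref{ax:mum}.

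For the first bullet, write $\theta \triangleq R\lambda \circ \tau S$ of type $TRS \nto RST$, and recall that $\comp{R}{\sigma}{S}$ has unit $\eta^R\eta^S$ and multiplication $\mu^R\mu^S \circ R\sigma S$. I would verify the four distributive-law axioms for $\theta$, sorting them into those concerning $T$ and those concerning $\comp{R}{}{S}$. The axioms~\eqref{ax:etam} and~\eqref{ax:mum} (relative to $\eta^T$ and $\mu^T$) follow by pasting the matching axioms of $\tau$ and of $\lambda$, since a $T$-strand crosses the $R$- and $S$-strands independently, with no interaction with $\sigma$; the axiom~\eqref{ax:etap} (relative to $\eta^R\eta^S$) is immediate from the $\eta^+$-axioms of $\tau$ and $\lambda$, the unit of $\comp{R}{}{S}$ not involving $\sigma$. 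The one delicate axiom is~\eqref{ax:mup} relative to $\mu^R\mu^S \circ R\sigma S$: there the multiplication intertwines $R$ and $S$ through $\sigma$, and pushing the $T$-strand to the top forces the $\tau$- and $\lambda$-crossings to commute past the $\sigma$-crossing. In the diagram this isolates the three strands $R$, $S$, $T$ crossing in the two possible orders -- exactly~\eqref{eq:yb} -- so one application of the Yang--Baxter rule, followed by the $\mu^+$-axioms of $\tau$ and $\lambda$ and naturality, closes the computation.

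The second bullet is entirely symmetric for $\theta' \triangleq \sigma T \circ S\tau \colon [R, \comp{S}{}{T}]$ of type $STR \nto RST$: now the $R$-side axioms~\eqref{ax:etap} and~\eqref{ax:mup} are the routine ones, while~\eqref{ax:mum} relative to the multiplication $\mu^S\mu^T \circ S\lambda T$ of $\comp{S}{}{T}$ is the axiom that consumes~\eqref{eq:yb}. For the third bullet, both $\comp{(\comp{R}{}{S})}{}{T}$ and $\comp{R}{}{(\comp{S}{}{T})}$ have underlying functor $RST$ and, unfolding the definitions, the common unit $\eta^R\eta^S\eta^T$; so only the two multiplications, as transformations $RSTRST \nto RST$, remain to be compared. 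Expanding each composite multiplication, both sort the interleaved word $RSTRST$ into $RRSSTT$ before applying $\mu^R$, $\mu^S$, $\mu^T$. Reading the two diagrams shows that they perform the \emph{same} first crossing (the $\tau$-crossing of the inner $T$ past the outer $R$) and afterwards differ only in the order of the $\lambda$- and the $\sigma$-crossing, which act on disjoint pairs of strands and hence commute by the interchange law. Thus, once the first two bullets are in hand, the equality of the two monads requires no fresh use of~\eqref{eq:yb}.

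The main obstacle is concentrated in the~\eqref{ax:mup} step of the first bullet and its mirror~\eqref{ax:mum} in the second: these are the only places where the Yang--Baxter equation is indispensable, and precisely where, in the classical commutative-diagram style, the number of naturality squares explodes. In the string-diagrammatic rendering each reduces to spotting the sub-diagram matching one side of~\eqref{eq:yb} and replacing it by the other, after which the monad laws collapse everything; so the expected difficulty is bookkeeping -- keeping the large diagrams correctly aligned while locating the Yang--Baxter pattern -- rather than any conceptual leap.
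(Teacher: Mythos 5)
Your proposal is correct and takes essentially the same route as the paper's (i.e.\ Cheng's) proof as rendered string-diagrammatically in the paper: direct verification of the four axioms for each composite law, with the Yang--Baxter equation consumed exactly once --- in the~\eqref{ax:mup} axiom of $R\lambda \circ \tau S$, respectively the~\eqref{ax:mum} axiom of $\sigma T \circ S\tau$, matching the computation shown in diagrams~\eqref{eq:cheng_diagram} and~\eqref{eq:cheng_returns} --- and the equality of the two composite monads reduced to the interchange law on disjoint strands. (Your only slip is terminological: the shared $\tau$-crossing swaps the \emph{outer} $T$ past the \emph{inner} $R$, not the reverse; this does not affect the argument.)
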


The result can be generalised to weak distributive laws. Let us consider separately what happens for the two possible composite laws. From now onwards, the natural transformations $\kappa$, $\pi$ and $\iota$ from Section~\ref{sec:laws} will mention explicitly which law they relate to -- \textit{e.g.}, given a weak distributive law $\lambda \colon [S,T]$, they are denoted by $\kappa^\lambda$, $\pi^\lambda$ and $\iota^\lambda$.

\subsubsection{First Composite Law} \label{subsec:first_comp}
Assume $\sigma \colon [R,S]$ is a weak distributive law. We will derive which axioms on $\lambda$ and $\tau$ are sufficient for
\begin{equation} 
\phi \triangleq
\begin{tikzcd}
  T(\wcomp{R}{}{S}) \arrow[r, "T\iota^\sigma", Rightarrow] & TRS \arrow[r, "\tau S", Rightarrow] & RTS \arrow[r, "R\lambda", Rightarrow] & RST \arrow[r, "\pi^\sigma T", Rightarrow] & (\wcomp{R}{}{S})T
  \end{tikzcd}
\end{equation}
to be a (weak) distributive law. The natural $\phi \colon [\wcomp{R}{}{S},T]$ can be depicted as follows, where $H$ is short notation for the functor $\wcomp{R}{}{S}$:
\begin{equation} 
    \tikzfig{2_first_composite}
\end{equation}
The~\eqref{ax:etap} axiom for $\phi$ can be derived from equation~\eqref{eq:kappa_eta} which states compatibility between units and the idempotent, the~\eqref{ax:etap} axiom for $\tau$, and the~\eqref{ax:etap} axiom for $\lambda$: 
\begin{equation}
    \tikzfig{2_first_etap}
\end{equation}
    The~\eqref{ax:etam} axiom for $\phi$ can be derived from the~\eqref{ax:etam} axiom for $\tau$, the~\eqref{ax:etam} axiom for $\lambda$, and the retract equation $\pi \circ \iota = 1_{\wcomp{R}{}{S}}$:
\begin{equation}
    \tikzfig{2_first_etam}
\end{equation}
    The~\eqref{ax:mup} axiom for $\phi$ can be derived from equation~\eqref{eq:kappa_mu} which states compatibility between multiplications and the idempotent, the retract equation $\pi \circ \iota = 1_{\wcomp{R}{}{S}}$, the~\eqref{ax:mup} axiom for both $\tau$ and $\lambda$, the Yang-Baxter equation, the retract equation again, and equation~\eqref{eq:kappa_mu} again:
\begin{equation} \label{eq:cheng_returns}
    \tikzfig{2_first_mup}
\end{equation}
    To prove the last axiom, we need a technical lemma.
    \begin{lemma} \label{lem:first_comp}
        If $\lambda$'s axioms~\eqref{ax:etap} and~\eqref{ax:mup} hold, and the Yang-Baxter equation holds, then the idempotent $\kappa^\sigma$ commutes with the natural $R\lambda \circ \tau S$ in the sense that 
        \begin{equation} 
            \kappa^\sigma T \circ R\lambda \circ \tau S = R\lambda \circ \tau S \circ T \kappa^\sigma 
        \end{equation}
    \end{lemma}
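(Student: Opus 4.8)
The plan is to unfold the idempotent $\kappa^\sigma$ into its defining composite and then slide the combined crossing $R\lambda \circ \tau S$ through each of its three constituents. Instantiating the definition \eqref{eq:kappa_def} at $\sigma \colon [R,S]$ gives
\[
\kappa^\sigma = R\mu^S \circ \sigma S \circ \eta^S RS \colon RS \nto RS,
\]
so $\kappa^\sigma$ is assembled solely from the monad structure $\eta^S,\mu^S$ of $S$ and from the crossing $\sigma$. The whole point is that the three hypotheses are exactly the rules letting the $T$-strand pass each of these pieces: \eqref{ax:etap} lets it cross the unit $\eta^S$, \eqref{ax:mup} lets it cross the multiplication $\mu^S$, and the Yang--Baxter equation \eqref{eq:yb} lets it cross $\sigma$.

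Concretely, I would start from the right-hand side $R\lambda \circ \tau S \circ T\kappa^\sigma$, expand $T\kappa^\sigma = TR\mu^S \circ T\sigma S \circ T\eta^S RS$, and push the crossing through in three moves. First, slide it past $TR\mu^S$: since $\tau$ and $\mu^S$ act on disjoint factors they commute (interchange law), giving $\tau S \circ TR\mu^S = RT\mu^S \circ \tau SS$, after which \eqref{ax:mup} for $\lambda$ replaces the single crossing over the merged $S$ by the two crossings $S\lambda \circ \lambda S$ over the unmerged $SS$. Second, the resulting segment $R\lambda S \circ \tau SS \circ T\sigma S$ is precisely $(R\lambda \circ \tau S \circ T\sigma)S$, whose inner part is rewritten by \eqref{eq:yb} (read right to left) into $\sigma T \circ S\tau \circ \lambda R$. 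Third, the bottommost piece $\lambda RS \circ T\eta^S RS = (\lambda \circ T\eta^S)RS$ collapses to $\eta^S TRS$ by \eqref{ax:etap}.

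After these three slides the crossing sits below all the $S$-structure, and it remains to reassemble $\kappa^\sigma T \circ R\lambda \circ \tau S$. Using naturality of $\eta^S$ to move the freshly created $S$-strand past $\tau S$, the term $\sigma TS \circ \eta^S RTS$ becomes $\xi TS$, where $\xi \triangleq \sigma \circ \eta^S R \colon R \nto RS$; the residual identity left to check is then
\[
RS\lambda \circ \xi TS = \xi ST \circ R\lambda,
\]
which is nothing but the interchange law for $\xi \colon R \nto RS$ and $\lambda \colon TS \nto ST$. Recognising $\xi S = \sigma S \circ \eta^S RS$ as the $\eta^S,\sigma$-part of $\kappa^\sigma$, so that $R\mu^S T \circ \xi ST = \kappa^\sigma T$, finally recovers $\kappa^\sigma T \circ R\lambda \circ \tau S$, as required.

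I expect the main obstacle to be organisational rather than conceptual: keeping track of which of the several parallel strands ($R$, the two copies of $S$, and $T$) each rewrite acts on, so that every axiom is applied to the correct sub-factor. In string-diagrammatic form this bookkeeping becomes transparent, since each of the three slides is a local move of the $T$-strand past a single node of the $\kappa^\sigma$ region and the closing step is a manifestly valid rearrangement. It is worth stressing that only the \emph{plus} axioms \eqref{ax:etap} and \eqref{ax:mup} of $\lambda$ enter the argument -- never \eqref{ax:etam} or \eqref{ax:mum} -- which is exactly what keeps the lemma available in the weak setting, and that no property of $\sigma$ or $\tau$ beyond naturality and \eqref{eq:yb} is used.
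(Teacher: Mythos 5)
Your proof is correct and is essentially the paper's own argument: the paper's string-diagrammatic proof consists of exactly the same three moves (sliding the $T$-strand past $\eta^S$ via~\eqref{ax:etap}, past $\sigma$ via~\eqref{eq:yb}, and past $\mu^S$ via~\eqref{ax:mup}, with the remaining rearrangements being naturality/interchange), which you have simply transcribed equationally. Your closing observations -- that only the plus-axioms of $\lambda$ are needed and nothing of $\sigma$, $\tau$ beyond naturality and Yang--Baxter -- match the hypotheses as the paper uses them.
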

    \begin{proof}
    Easy to see using string diagrams:
        \begin{equation} 
            \tikzfig{2_lemma_asymmetry}
        \end{equation}
    \end{proof}
    The~\eqref{ax:mum} axiom for $\phi$ can then be derived from the retract equation $\pi \circ \iota = 1_{\wcomp{R}{}{S}}$, the~\eqref{ax:mum} axiom for both $\tau$ and $\lambda$, and the lemma relying on $\lambda$'s axioms~\eqref{ax:etap} and~\eqref{ax:mup}.
\begin{equation}
    \tikzfig{2_first_mum}
\end{equation}

Table~\ref{table:first_comp} sums up which axioms of $\tau$ and $\lambda$ are sufficient to make axioms hold for $\phi$.
\begin{table}[htp]
  \caption{Required axioms for the composite law $\phi$}
\begin{center}
\begin{tabular}{llll}
  \hline
  $\phi$ & requires & $\tau$ & $\lambda$ \\
  \hline 
  \eqref{ax:etap} & & \eqref{ax:etap} & \eqref{ax:etap} \\ 
  \eqref{ax:etam} & & \eqref{ax:etam} & \eqref{ax:etam} \\ 
  \eqref{ax:mup} & & \eqref{ax:mup} & \eqref{ax:mup} \\ 
  \eqref{ax:mum} & & \eqref{ax:mum} & \eqref{ax:mum}, \eqref{ax:etap}, \eqref{ax:mup} \\
  \hline
\end{tabular} \label{table:first_comp}
\end{center}
\end{table}

Using the information contained in Table~\ref{table:first_comp}, we get the following result.

\begin{theorem} \label{theo:first_comp}
    Let $R$, $S$, $T$ be monads and $\lambda \colon [S,T]$, $\sigma \colon [R,S]$, $\tau \colon [R,T]$ be weak distributive laws satisfying the Yang-Baxter equation. Then the composite $\phi \triangleq \pi^\sigma T \circ R\lambda \circ \tau S \circ T\iota^\sigma \colon [\wcomp{R}{}{S},T]$ is a weak distributive law. If, additionally, $\tau$ and $\lambda$ are distributive laws, then $\phi$ is a distributive law.
\end{theorem}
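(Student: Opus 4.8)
The plan is to treat this theorem as the synthesis of the four axiom derivations already carried out above, reading off the dependencies recorded in Table~\ref{table:first_comp}. First I would recall the definitions: $\phi$ is a weak distributive law exactly when it satisfies \eqref{ax:etap}, \eqref{ax:mup}, and \eqref{ax:mum}, and a full distributive law when it additionally satisfies \eqref{ax:etam}. Before checking any axiom, I would note that the statement is well-posed: because $\sigma \colon [R,S]$ is a weak distributive law, the idempotent $\kappa^\sigma$ splits through $\pi^\sigma, \iota^\sigma$ with $\pi^\sigma \circ \iota^\sigma = 1_{\wcomp{R}{}{S}}$, so $\wcomp{R}{}{S}$ is a genuine monad and $\phi$, being a composite of natural transformations, is a natural transformation of the declared type $[\wcomp{R}{}{S},T]$.

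Next I would verify the three weak-law axioms for $\phi$ one at a time, each time citing the derivation established above and confirming that the hypotheses it consumes are all supplied by the weak-law assumptions on $\sigma$, $\tau$, $\lambda$. The \eqref{ax:etap} axiom needs only \eqref{ax:etap} for $\tau$ and $\lambda$; the \eqref{ax:mup} axiom needs \eqref{ax:mup} for $\tau$ and $\lambda$ together with the Yang--Baxter equation and the compatibility \eqref{eq:kappa_mu}; and the \eqref{ax:mum} axiom needs \eqref{ax:mum} for $\tau$ and $\lambda$ together with Lemma~\ref{lem:first_comp}. Since weak distributive laws satisfy \eqref{ax:etap}, \eqref{ax:mup}, and \eqref{ax:mum}, every one of these ingredients is available, and hence $\phi$ is a weak distributive law.

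For the strengthening, I would observe that promoting $\tau$ and $\lambda$ to full distributive laws adds precisely the \eqref{ax:etam} axiom for each of them, which is exactly the input required by the \eqref{ax:etam} derivation for $\phi$ (alongside the retract equation $\pi^\sigma \circ \iota^\sigma = 1_{\wcomp{R}{}{S}}$, already in hand). Thus $\phi$ then satisfies all four axioms and is a full distributive law, recovering the relevant slice of Theorem~\ref{theo:cheng}.

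I expect the only genuinely delicate point to be the \eqref{ax:mum} axiom, and the reason is structural rather than computational: this is where the splitting idempotent $\kappa^\sigma$ must be slid past $R\lambda \circ \tau S$, which is the content of Lemma~\ref{lem:first_comp}. That lemma, in turn, is where the asymmetry of the weak setting surfaces, since it consumes $\lambda$'s \eqref{ax:etap} and \eqref{ax:mup} but not \eqref{ax:etam}, so the main care is in confirming that these axioms survive the weakening and feed correctly into the \eqref{ax:mum} derivation. Everything else reduces to matching the weak-law hypotheses on $\sigma$, $\tau$, $\lambda$ against the entries of Table~\ref{table:first_comp}.
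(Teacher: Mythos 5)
Your proposal is correct and follows exactly the paper's own route: the paper proves Theorem~\ref{theo:first_comp} precisely by assembling the four axiom derivations of Section~\ref{subsec:first_comp} as recorded in Table~\ref{table:first_comp}, noting that the weak-law hypotheses on $\tau$ and $\lambda$ supply everything needed for \eqref{ax:etap}, \eqref{ax:mup}, and \eqref{ax:mum} (the last via Lemma~\ref{lem:first_comp}, which indeed only consumes $\lambda$'s \eqref{ax:etap} and \eqref{ax:mup} plus Yang--Baxter), and that upgrading $\tau$ and $\lambda$ to full laws supplies \eqref{ax:etam}. Your identification of the \eqref{ax:mum} step as the delicate one, because the idempotent $\kappa^\sigma$ must commute past $R\lambda \circ \tau S$, matches the paper's structure exactly.
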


\subsubsection{Second Composite Law} \label{subsec:second_comp}
One could also compose the laws the other way around. Assume $\lambda \colon [S,T]$ is a weak distributive law. Consider the natural transformation
\begin{equation}
  \psi \triangleq
\begin{tikzcd}
  (\wcomp{S}{}{T})R \arrow[r, "\iota^\lambda R", Rightarrow] & STR \arrow[r, "S\tau", Rightarrow] & SRT \arrow[r, "\sigma T", Rightarrow] & RST \arrow[r, "R\pi^\lambda", Rightarrow] & R(\wcomp{S}{}{T})
  \end{tikzcd} 
\end{equation}
The natural $\psi \colon [R,\wcomp{S}{}{T}]$ can be depicted as follows, where $K$ is short notation for the functor $\wcomp{S}{}{T}$:
\begin{equation}
\tikzfig{2_second_composite}
\end{equation}

Again, assuming some axioms of distributive laws for $\sigma$ and $\tau$, one can infer axioms for $\psi$. Using string diagrams, one can easily see that the proofs are symmetrical from the ones in Section~\ref{subsec:first_comp}. As an instance, recall that Lemma~\ref{lem:first_comp} was required to prove the~\eqref{ax:mum} axiom for $\phi$. Now, the following symmetrical result is required to prove the~\eqref{ax:mup} axiom for $\psi$:

\begin{lemma} \label{lem:second_comp}
If $\tau$'s axioms~\eqref{ax:etam} and~\eqref{ax:mum} hold, and the Yang-Baxter equation holds, then the idempotent $\kappa^\lambda$ commutes with the natural $\sigma T \circ S\tau$ in the sense that
\begin{equation}
  R\kappa^\lambda \circ \sigma T\circ S\tau = \sigma T \circ S\tau \circ \kappa^\lambda R
\end{equation}
\end{lemma}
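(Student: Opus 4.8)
The plan is to mirror the proof of Lemma~\ref{lem:first_comp} under the top-to-bottom reflection of string diagrams, which exchanges units with multiplications and hence swaps the role of the~\eqref{ax:etap}/\eqref{ax:mup} axioms with that of the~\eqref{ax:etam}/\eqref{ax:mum} axioms. Concretely, I would expand the idempotent as $\kappa^\lambda = S\mu^T \circ \lambda T \circ \eta^T ST$ and draw both sides of the claimed identity as string diagrams. Reading the right-hand side $\sigma T \circ S\tau \circ \kappa^\lambda R$ from the bottom up, the gadget $\kappa^\lambda$ first acts on the $S$- and $T$-strings: a fresh $T$ is created by $\eta^T$ to the left of $S$, slid rightward over $S$ by a $\lambda$-crossing, and then absorbed into the original $T$ by $\mu^T$; above this, the crossings $S\tau$ and $\sigma T$ carry the $R$-string leftward, first past $T$ via $\tau$ and then past $S$ via $\sigma$. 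The goal is to push the whole $\kappa^\lambda$ gadget above both crossings, so that it ends up acting on the final $S$- and $T$-strings; this is exactly the left-hand side $R\kappa^\lambda \circ \sigma T \circ S\tau$.

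Three local rewrites, applied in an appropriate order, carry out the slide, matching the three hypotheses. First, the $\mu^T$ node merging the fresh $T$ into the original $T$ must be commuted past the $R$-string: under the $S$-whisker this is governed by $\tau$'s axiom~\eqref{ax:mum}, $\tau \circ \mu^T R = R\mu^T \circ \tau T \circ T\tau$, which replaces the single $\tau$-crossing of the merged $T$ by two $\tau$-crossings, one for each copy. Second, the $\lambda$-crossing internal to $\kappa^\lambda$ must be interchanged with the $\sigma$- and $\tau$-crossings that transport $R$; since these three crossings involve only the three strings $R$, $S$ and the fresh $T$, this is precisely one instance of the Yang-Baxter equation~\eqref{eq:yb}. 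Third, the unit $\eta^T$ creating the fresh $T$ must slide across $R$, which is exactly what $\tau$'s axiom~\eqref{ax:etam}, $\tau \circ \eta^T R = R\eta^T$, allows. Carrying out these rewrites transforms the right-hand diagram into the left-hand one. Note that no axiom of $\sigma$ or of $\lambda$ itself is needed, in agreement with the hypotheses.

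As in Lemma~\ref{lem:first_comp}, I do not expect a genuine obstacle here: once the diagrams are drawn, the identity is immediate. The only delicate bookkeeping lies in the~\eqref{ax:mum} step, where the single $\tau$-crossing of the merged $T$-string unfolds into the pair $\tau T \circ T\tau$; one must then check that these two crossings, together with the $\lambda$-crossing, are positioned so that a single application of Yang-Baxter closes the diagram. This is the reflected counterpart of the rearrangement appearing in the proof of Lemma~\ref{lem:first_comp}, so the same picture, suitably rotated, supplies the proof.
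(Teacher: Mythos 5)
Your proposal is correct and takes essentially the same route as the paper: the paper's string-diagrammatic proof consists precisely of expanding $\kappa^\lambda = S\mu^T \circ \lambda T \circ \eta^T ST$, splitting the crossing of the merged $T$ over $R$ via $\tau$'s~\eqref{ax:mum} axiom, applying the Yang-Baxter equation to the three crossings involving $R$, $S$ and the fresh $T$, and finally sliding the unit $\eta^T$ across $R$ via $\tau$'s~\eqref{ax:etam} axiom, with only naturality used in between. Your identification of which axioms act where, and your observation that no axiom of $\sigma$ or $\lambda$ is needed, agree with the paper.
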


\begin{proof}
Easy to see using string diagrams:
\begin{equation}
  \tikzfig{2_lemma_asymmetry_2}
\end{equation}
\end{proof}

Table~\ref{table:second_comp} sums up which axioms of $\sigma$ and $\tau$ are sufficient to make axioms hold for $\psi$.

\begin{table}[htp]
  \caption{Required axioms for the composite law $\psi$}
\begin{center}
\begin{tabular}{llll}
    \hline
    $\psi$ & requires & $\sigma$ & $\tau$ \\ 
    \hline 
    \eqref{ax:etap} & & \eqref{ax:etap} & \eqref{ax:etap} \\
    \eqref{ax:etam} & & \eqref{ax:etam} & \eqref{ax:etam} \\
    \eqref{ax:mup} & & \eqref{ax:mup} & \eqref{ax:mup}, \eqref{ax:etam}, \eqref{ax:mum} \\ 
    \eqref{ax:mum} & & \eqref{ax:mum} & \eqref{ax:mum} \\
    \hline
\end{tabular} \label{table:second_comp}
\end{center}
\end{table}

Using the information contained in Table~\ref{table:second_comp}, we get the following result.

\begin{theorem} \label{theo:second_comp}
  Let $R$, $S$, $T$ be monads, $\lambda \colon [S,T]$, $\sigma \colon [R,S]$ be weak distributive laws, and $\tau \colon [R,T]$ be a distributive law, satisfying the Yang-Baxter equation. Then the composite $\psi \triangleq R\pi^\lambda \circ \sigma T \circ S\tau \circ \iota^\lambda R \colon [R,\wcomp{S}{}{T}]$ is a weak distributive law. If, additionally, $\sigma$ is a distributive law, then $\psi$ is a distributive law.
\end{theorem}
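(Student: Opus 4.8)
The plan is to establish the three weak-law axioms \eqref{ax:etap}, \eqref{ax:mup}, and \eqref{ax:mum} for $\psi$ by mirroring the diagrammatic derivations of Section~\ref{subsec:first_comp}. The construction of $\psi$ is the left--right reflection of that of $\phi$: the weak law is now $\lambda$ (so we split $\kappa^\lambda$ to form $K = \wcomp{S}{}{T}$), and, as the passage from Lemma~\ref{lem:first_comp} to Lemma~\ref{lem:second_comp} already signals, the reflection interchanges the $(+)$ and $(-)$ axioms throughout. Once the derivations are transported, I would read off from Table~\ref{table:second_comp} that the hypotheses supply every source axiom consumed.

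Concretely, \eqref{ax:etap} for $\psi$ is the mirror of \eqref{ax:etam} for $\phi$ and is obtained from the retract equation $\pi^\lambda \circ \iota^\lambda = 1_K$ together with \eqref{ax:etap} for both $\sigma$ and $\tau$; and \eqref{ax:mum} for $\psi$ is the mirror of \eqref{ax:mup} for $\phi$, obtained from the multiplication-compatibility equation~\eqref{eq:kappa_mu} for $\kappa^\lambda$, the retract equation, the Yang--Baxter equation, and \eqref{ax:mum} for both $\sigma$ and $\tau$. All of these source axioms are available, since $\sigma$ is weak and $\tau$ is a distributive law.

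The load-bearing step is \eqref{ax:mup} for $\psi$, the mirror of \eqref{ax:mum} for $\phi$. Just as the latter invoked Lemma~\ref{lem:first_comp} to commute $\kappa^\sigma$ past $R\lambda \circ \tau S$, here I would invoke Lemma~\ref{lem:second_comp} to commute $\kappa^\lambda$ past $\sigma T \circ S\tau$, and then finish using the retract equation and \eqref{ax:mup} for $\sigma$ and $\tau$. This is where the main obstacle --- and the reason for the asymmetric hypotheses --- lies: Lemma~\ref{lem:second_comp} demands \eqref{ax:etam} and \eqref{ax:mum} for $\tau$, and \eqref{ax:etam} is exactly the axiom discarded when passing to weak laws. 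It cannot be recovered from the weak axioms alone, so for this derivation to go through $\tau$ must be a genuine distributive law, not merely a weak one. With \eqref{ax:etap}, \eqref{ax:mup}, and \eqref{ax:mum} all verified, $\psi$ is a weak distributive law.

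For the strengthening, I would finally check \eqref{ax:etam} for $\psi$ --- the mirror of \eqref{ax:etap} for $\phi$ --- which follows from the unit-compatibility equation~\eqref{eq:kappa_eta} for $\kappa^\lambda$ together with \eqref{ax:etam} for both $\sigma$ and $\tau$. The axiom \eqref{ax:etam} for $\tau$ is already on hand, so the only missing ingredient is \eqref{ax:etam} for $\sigma$, which becomes available precisely when $\sigma$ is promoted to a full distributive law. In that case $\psi$ satisfies all four axioms and is therefore a distributive law.
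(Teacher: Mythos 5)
Your proposal is correct and takes essentially the same route as the paper: it transports the axiom-by-axiom derivations of Section~\ref{subsec:first_comp} to $\psi$, invokes Lemma~\ref{lem:second_comp} for the load-bearing \eqref{ax:mup} step (correctly locating there the need for $\tau$'s \eqref{ax:etam}, which is exactly why $\tau$ must be a genuine distributive law while $\sigma$ and $\lambda$ may stay weak), and reads off the bookkeeping of Table~\ref{table:second_comp} both for the weak-law conclusion and for the strengthening when $\sigma$ is a full distributive law. This is precisely the paper's argument.
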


\subsubsection{Associativity of Weak Iteration}

According to the preceding paragraphs, provided the Yang-Baxter equation holds and enough axioms are satisfied, there are two ways of obtaining a composite weak distributive law. The first one, described in Section~\ref{subsec:first_comp}, yields by Theorem~\ref{theo:first_comp} a weak distributive law $\phi \colon [\wcomp{R}{}{S},T]$ and thus a monad $\wcomp{(\wcomp{R}{}{S})}{}{T}$. The second one, described in Section~\ref{subsec:second_comp}, yields by Theorem~\ref{theo:second_comp} a weak distributive law $\psi \colon [R,\wcomp{S}{}{T}]$ and thus a monad $\wcomp{R}{}{(\wcomp{S}{}{T})}$. Contrary to Theorem~\ref{theo:cheng}, it is not obvious that even the underlying functors of these two monads are the same. It turns out that they are, and that more broadly, one can identify these two monads, allowing us to write unambiguously $\wcomp{R}{}{\wcomp{S}{}{T}}$.

\begin{theorem} \label{theo:comp_equal}
    Let $R$, $S$, $T$ be monads, $\lambda \colon [S,T]$, $\sigma \colon [R,S]$ be weak distributive laws and $\tau \colon [R,T]$ be a distributive law, satisfying the Yang-Baxter equation. Then the weak distributive laws $\phi \triangleq \pi^\sigma T \circ R \lambda \circ \tau S \circ T\iota^\sigma \colon [\wcomp{R}{}{S},T]$ and $\psi \triangleq R\pi^\lambda \circ \sigma T \circ S\tau \circ \iota^\lambda R \colon [R,\wcomp{S}{}{T}]$ generate the same monad, \textit{i.e.}, $\wcomp{(\wcomp{R}{}{S})}{}{T} = \wcomp{R}{}{(\wcomp{S}{}{T})}$.
\end{theorem}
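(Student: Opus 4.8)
The plan is to realise both monads as splittings of one and the same idempotent on the triple composite functor $RST$, and then to match their units and multiplications through that splitting.

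First I would observe that the underlying functor of $\wcomp{(\wcomp{R}{}{S})}{}{T}$ is the splitting of $\kappa^\phi \colon HT \nto HT$ (with $H = \wcomp{R}{}{S}$), hence a retract of $RST$ via $\iota^\sigma T$ and $\pi^\sigma T$; likewise the functor of $\wcomp{R}{}{(\wcomp{S}{}{T})}$ is the splitting of $\kappa^\psi \colon RK \nto RK$ (with $K = \wcomp{S}{}{T}$), a retract of $RST$ via $R\iota^\lambda$ and $R\pi^\lambda$. Conjugating these idempotents to $RST$, set $\overline{\kappa^\phi} \triangleq \iota^\sigma T \circ \kappa^\phi \circ \pi^\sigma T$ and $\overline{\kappa^\psi} \triangleq R\iota^\lambda \circ \kappa^\psi \circ R\pi^\lambda$. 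Since $\pi^\sigma \circ \iota^\sigma = 1_H$ and $\pi^\lambda \circ \iota^\lambda = 1_K$, these are idempotents whose splittings are exactly the two functors in question, so it suffices to identify $\overline{\kappa^\phi}$ with $\overline{\kappa^\psi}$ (giving equal functors) and then to check that the transported units and multiplications agree.

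The core computation I would carry out reduces each conjugated idempotent to a product of the two ``bubbles'' $\kappa^\sigma$ and $\kappa^\lambda$. Expanding $\kappa^\phi = H\mu^T \circ \phi T \circ \eta^T HT$, substituting $\phi = \pi^\sigma T \circ R\lambda \circ \tau S \circ T\iota^\sigma$, and repeatedly using $\iota^\sigma \circ \pi^\sigma = \kappa^\sigma$, naturality of $\eta^T$, Lemma~\ref{lem:first_comp} (to slide $\kappa^\sigma T$ past $R\lambda \circ \tau S$), idempotence of $\kappa^\sigma$, and the~\eqref{ax:etam} axiom of $\tau$ (available because $\tau$ is a genuine distributive law), I expect to obtain $\overline{\kappa^\phi} = R\kappa^\lambda \circ \kappa^\sigma T$. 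The mirror-image computation, now invoking Lemma~\ref{lem:second_comp}, should give $\overline{\kappa^\psi} = \kappa^\sigma T \circ R\kappa^\lambda$. The crux of the theorem is therefore the commutation of the two idempotents,
\begin{equation*}
R\kappa^\lambda \circ \kappa^\sigma T = \kappa^\sigma T \circ R\kappa^\lambda,
\end{equation*}
which is exactly the (a priori non-obvious) statement that the two underlying functors coincide. I would prove it with a single string-diagram manipulation: the $\kappa^\sigma$-bubble lives on the $R,S$ strands and the $\kappa^\lambda$-bubble on the $S,T$ strands, and sliding one past the other is precisely what the Yang--Baxter equation~\eqref{eq:yb}, together with the unit and multiplication axioms of the three laws, licenses. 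This commutation is the step I expect to be the main obstacle.

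Once the common idempotent $e \triangleq R\kappa^\lambda \circ \kappa^\sigma T = \kappa^\sigma T \circ R\kappa^\lambda$ is fixed, its chosen splitting is the common underlying functor, and it remains to transport the two monad structures along it. For the units, both $\pi^\phi \circ \eta^H\eta^T$ and the $\psi$-side unit transport through $e$ to $e \circ \eta^R\eta^S\eta^T = \eta^R\eta^S\eta^T$, the last equality being an instance of~\eqref{eq:kappa_eta}; hence the units agree. For the multiplications, transporting both through $e$ lands on the same triple iterated multiplication on $RST$ built from $\mu^R$, $\mu^S$, $\mu^T$ and the three crossings; this is essentially Cheng's associativity computation underlying Theorem~\ref{theo:cheng}, made to commute with $e$ using~\eqref{eq:kappa_mu} and the Yang--Baxter equation as in~\eqref{eq:cheng_returns}. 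This last part is the longest but most routine diagram chase; with the functors already identified it only has to be verified, not discovered. Putting the three agreements together yields $\wcomp{(\wcomp{R}{}{S})}{}{T} = \wcomp{R}{}{(\wcomp{S}{}{T})}$.
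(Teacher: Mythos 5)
Your proposal is sound, but it takes a genuinely different route from the paper's, and its crux is a lemma the paper deliberately avoids. The paper never forms a common idempotent on $RST$: it computes $\kappa^\phi = \pi^\sigma T \circ R\kappa^\lambda \circ \iota^\sigma T = \alpha \circ \beta$ and $\kappa^\psi = R\pi^\lambda \circ \kappa^\sigma T \circ R\iota^\lambda = \beta \circ \alpha$, with $\alpha = \pi^\sigma T \circ R\iota^\lambda$ and $\beta = R\pi^\lambda \circ \iota^\sigma T$, and then uses the standard fact that a splitting of $\alpha \circ \beta$ induces a splitting of $\beta \circ \alpha$ through the same object ($\iota^\psi \triangleq \beta \circ \iota^\phi$, $\pi^\psi \triangleq \pi^\phi \circ \alpha$); no commutation of the two bubbles is ever needed. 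Your route instead stands or falls with the equality $R\kappa^\lambda \circ \kappa^\sigma T = \kappa^\sigma T \circ R\kappa^\lambda$, which you rightly isolate as the main obstacle — it is true, but not by the ``sliding'' you describe: the two bubbles share the $S$-strand, so no generic interchange applies, and a naive slide runs exactly into the missing \eqref{ax:etam} axiom for $\lambda$. What saves it is the \eqref{ax:etam} axiom of $\tau$ (this is precisely where the hypothesis that $\tau$ is a genuine distributive law enters), used to smuggle a $\tau$-crossing in and out around Lemma~\ref{lem:first_comp} — i.e., your own recipe for $\overline{\kappa^\phi}$ run once more:
\begin{align*}
\kappa^\sigma T \circ R\kappa^\lambda
&= RS\mu^T \circ \kappa^\sigma TT \circ R\lambda T \circ \tau ST \circ \eta^T RST
&& \text{(naturality, \eqref{ax:etam} for } \tau) \\
&= RS\mu^T \circ R\lambda T \circ \tau ST \circ T\kappa^\sigma T \circ \eta^T RST
&& \text{(Lemma~\ref{lem:first_comp})} \\
&= RS\mu^T \circ R\lambda T \circ R\eta^T ST \circ \kappa^\sigma T
\;=\; R\kappa^\lambda \circ \kappa^\sigma T
&& \text{(naturality, \eqref{ax:etam} for } \tau)
\end{align*}
With this lemma in hand your plan closes: the conjugates $\overline{\kappa^\phi} = \kappa^\sigma T \circ R\kappa^\lambda \circ \kappa^\sigma T$ and $\overline{\kappa^\psi} = R\kappa^\lambda \circ \kappa^\sigma T \circ R\kappa^\lambda$ (what the conjugation of \eqref{eq:kappa_phi} and \eqref{eq:kappa_psi} literally gives) both collapse by idempotence to the single $e = R\kappa^\lambda \circ \kappa^\sigma T$, the splitting of $e$ restricts to splittings of $\kappa^\phi$ and $\kappa^\psi$, and the transport of units (via \eqref{eq:kappa_eta}) and multiplications (via \eqref{eq:kappa_mu}, Lemma~\ref{lem:first_comp} and Lemma~\ref{lem:second_comp}) reproduces, in different clothing, the computations of Appendix~\ref{app:iteration}. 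Comparing the two: the paper's factorization trick is more economical, since it needs no commutation lemma at all; your approach costs that extra lemma but proves a strictly stronger and more symmetric statement — both monads are splittings of one and the same canonical idempotent $R\kappa^\lambda \circ \kappa^\sigma T = \kappa^\sigma T \circ R\kappa^\lambda$ on $RST$ — which makes the associativity of weak iteration manifest and would likely be reusable elsewhere.
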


\begin{proof}
After some string-diagrammatic computations (see Appendix~\ref{app:iteration}), one can express the idempotents $\kappa^\phi$ and $\kappa^\psi$ as follows:
\begin{align} 
& \kappa^\phi = \pi^\sigma T \circ R\kappa^\lambda \circ \iota^\sigma T = \alpha \circ \beta \label{eq:kappa_phi} \\ 
& \kappa^\psi = R\pi^\lambda \circ \kappa^\sigma T \circ R\iota^\lambda = \beta \circ \alpha \label{eq:kappa_psi}
\end{align}
where $\alpha \triangleq \pi^\sigma T \circ R\iota^\lambda$ and $\beta \triangleq R\pi^\lambda \circ \iota^\sigma T$, as pictured below.
\begin{equation}
\tikzfig{2_kappa_phi_psi}
\end{equation}
Now let us split $\kappa^\phi = \iota^\phi \circ \pi^\phi$ with $\pi^\phi \circ \iota^\phi = 1_{\wcomp{(\wcomp{R}{}{S})}{}{T}}$. We show that this splitting generates also a splitting for $\kappa^\psi$. Indeed, one can choose $\iota^\psi \triangleq \beta \circ \iota^\phi$ and $\pi^\psi \triangleq \pi^\phi \circ \alpha$ because 
\begin{align} 
& \iota^\psi \circ \pi^\psi = \beta \circ \iota^\phi \circ \pi^\phi \circ \alpha = \beta \circ \kappa^\phi \circ \alpha = \beta \circ \alpha \circ \beta \circ \alpha = \kappa^\psi \circ \kappa^\psi = \kappa^\psi \\ 
& \pi^\psi \circ \iota^\psi = \pi^\phi \circ \alpha \circ \beta \circ \iota^\phi = \pi^\phi \circ \kappa^\phi \circ \iota^\phi = (\pi^\phi \circ \iota^\phi) \circ (\pi^\phi \circ \iota^\phi) = 1_{\wcomp{(\wcomp{R}{}{S})}{}{T}}
\end{align}
Hence the splitting of $\kappa^\psi$ is as follows:
\begin{equation} 
\begin{tikzcd}
    R(\wcomp{S}{}{T}) \arrow[rr, "\pi^\psi"', Rightarrow] \arrow[rd, "\alpha", Rightarrow] \arrow[rrrr, "\kappa^\psi", shift left=3, Rightarrow] &                                         & \wcomp{R}{}{(\wcomp{S}{}{T})} \arrow[rr, "\iota^\psi"', Rightarrow] &                                       & R(\wcomp{S}{}{T}) \\
                                                                                                             & (\wcomp{R}{}{S})T \arrow[r, "\pi^\phi", Rightarrow] & \wcomp{(\wcomp{R}{}{S})}{}{T} \arrow[r, "\iota^\phi", Rightarrow]   & (\wcomp{R}{}{S})T \arrow[ru, "\beta", Rightarrow] &                  
    \end{tikzcd}
\end{equation}
Therefore, the functors $\wcomp{(\wcomp{R}{}{S})}{}{T}$ and $\wcomp{R}{}{(\wcomp{S}{}{T})}$ can be identified. We denote it by $\wcomp{R}{}{\wcomp{S}{}{T}}$ (short notation $\Omega$) and picture it as follows:
\begin{equation}
\tikzfig{2_triple_functor}
\end{equation}
Moreover, one can separate the three functors -- or merge the three functors -- in any order \emph{up to an idempotent}, as shown in the following diagrams (proved in Appendix~\ref{app:iteration}):
\begin{equation} \label{eq:separate_merge}
\tikzfig{2_separate_merge}
\end{equation}
Using these compatibility properties along with aforementioned properties of $\kappa$, it is then a matter of lengthy but straightforward computations to show that units and multiplications of $\wcomp{(\wcomp{R}{}{S})}{}{T}$ and $\wcomp{R}{}{(\wcomp{S}{}{T})}$ coincide (see Appendix~\ref{app:iteration}).
\end{proof}

\subsection{Examples} \label{subsec:examples}

In this section, we illustrate many cases where the results of the previous section can be applied by proving the Yang-Baxter equation for some usual monads and (weak) distributive laws.

\subsubsection{Iterating with Trivial Weak Distributive Laws}

When one of the three laws involved is trivial, the Yang-Baxter equation usually simplifies to a condition relating the underlying monad morphism to the other laws. We mention the two following cases.

\begin{proposition} \label{prop:trivial_yb}
Let $\tau \colon [R,T]$ be a weak distributive law.
\begin{itemize} 
\item Let $\sigma \colon [R,S]$ be a distributive law and $\lambda \triangleq \eta^S \mu^T \circ T\gamma \colon [S,T]$ be a trivial weak distributive law arising from a monad morphism $\gamma \colon S \nto T$ such that $R\gamma \circ \sigma = \tau \circ \gamma R$. Then the Yang-Baxter equation is satisfied.
\item Let $\lambda \colon [S,T]$ be a weak distributive law and $\sigma \triangleq \eta^R \mu^S \circ S\gamma \colon [R,S]$ be a trivial weak distributive law arising from a monad morphism $\gamma \colon R \nto S$ such that $\gamma T \circ \tau = \lambda \circ \gamma T$. Then the Yang-Baxter equation is satisfied.
\end{itemize}
\end{proposition}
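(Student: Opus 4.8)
The plan is to substitute the explicit formula for the trivial law directly into the Yang-Baxter equation \eqref{eq:yb} and show that both sides collapse onto a common ``core'' identity after peeling off the unit that the trivial law inserts. I will treat the first bullet in detail; the second is entirely symmetric, exchanging the two halves of each crossing. Writing the horizontal composite as $\eta^S\mu^T = \eta^S T \circ \mu^T$, the trivial law reads $\lambda = \eta^S T \circ \mu^T \circ T\gamma$, so that
\[
\lambda R = \eta^S TR \circ \mu^T R \circ T\gamma R, \qquad R\lambda = R\eta^S T \circ R\mu^T \circ RT\gamma.
\]
First I would push the inserted unit $\eta^S$ to the outside of each side. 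On the left-hand side $\sigma T \circ S\tau \circ \lambda R$, naturality of $\eta^S \colon 1 \nto S$ with respect to $\tau$ gives $S\tau \circ \eta^S TR = \eta^S RT \circ \tau$, and then \eqref{ax:etam} for $\sigma$ (whiskered by $T$) gives $\sigma T \circ \eta^S RT = R\eta^S T$. On the right-hand side $R\lambda \circ \tau S \circ T\sigma$ the factor $R\eta^S T$ is already in front. Hence both sides equal $R\eta^S T$ composed with a remaining term, and \eqref{eq:yb} reduces to the core identity
\[
\tau \circ \mu^T R \circ T\gamma R = R\mu^T \circ RT\gamma \circ \tau S \circ T\sigma.
\]

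To close the core identity I would rewrite its left-hand side with \eqref{ax:mum} for $\tau$, namely $\tau \circ \mu^T R = R\mu^T \circ \tau T \circ T\tau$, obtaining $R\mu^T \circ \tau T \circ T\tau \circ T\gamma R$. The innermost block is $T\tau \circ T\gamma R = T(\tau \circ \gamma R)$, into which the compatibility hypothesis $\tau \circ \gamma R = R\gamma \circ \sigma$ plugs exactly, turning it into $TR\gamma \circ T\sigma$. A single application of naturality of $\tau \colon TR \nto RT$ (with respect to $\gamma$) then slides the remaining $\gamma$ across the crossing, $\tau T \circ TR\gamma = RT\gamma \circ \tau S$, which produces precisely the target right-hand side. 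The second bullet follows the same three-move pattern, but now the trivial law inserts an $\eta^R$: one peels it off using naturality of $\eta^R$ and \eqref{ax:etap} for $\tau$, and closes the resulting core with \eqref{ax:mup} for $\lambda$, the compatibility hypothesis $\gamma T \circ \tau = \lambda \circ T\gamma$, and naturality of $\lambda$.

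I expect the only genuine difficulty to be bookkeeping: tracking the many whiskerings and ensuring each axiom is applied on the correct string. The conceptual content is slight, since each side of \eqref{eq:yb} is forced to factor as ``insert the unit, then do the work'', the units are removed by the unit axioms, and the compatibility hypothesis is exactly the glue identifying the two cores. Two points are worth flagging. First, in the first bullet the removal of $\eta^S$ genuinely needs \eqref{ax:etam} for $\sigma$, which a merely weak law does not satisfy; this is why $\sigma$ must be a full distributive law there, whereas the symmetric argument for the second bullet uses only \eqref{ax:etap} and \eqref{ax:mup}, both retained by weak laws. Second, the monad-morphism property of $\gamma$ is not actually used in verifying \eqref{eq:yb}; it is needed only to guarantee that the trivial transformation is a weak distributive law in the first place (Example~\ref{ex:trivial_wdl}). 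Throughout, string diagrams would make the whiskering bookkeeping transparent, as each reduction is a local rewrite: bending the unit to the outside, then sliding the $\gamma$-node through a single crossing.
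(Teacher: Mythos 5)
Your proof is correct: both bullets typecheck, each axiom is invoked exactly where it is available -- \eqref{ax:etam} only for the full law $\sigma$ in the first bullet, and only \eqref{ax:etap}, \eqref{ax:mup}, \eqref{ax:mum} plus naturality for the weak laws -- and the compatibility hypothesis is precisely what glues the two cores together; your observation that the monad-morphism property of $\gamma$ plays no role in \eqref{eq:yb} itself is also accurate. The paper states Proposition~\ref{prop:trivial_yb} without including a proof, so there is nothing to compare against, but your argument is the natural one, and you correctly read the second hypothesis as $\gamma T \circ \tau = \lambda \circ T\gamma$ (the paper's $\lambda \circ \gamma T$ does not typecheck).
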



\subsubsection{Iterating with Exceptions}

The exception monad $E$ from Example~\ref{ex:exception} can be described by seeing $\inl \colon 1 \nto E$ and $\inr \colon \underline{1} \nto E$ as natural transformations, where $1$ is the identity functor and $\underline{1}$ is the constant functor $\underline{1}X \triangleq 1 = \{*\}$, $\underline{1}f \triangleq 1_{\{*\}}$. One can then define a well-known generic distributive law relying directly on these two coproduct injections.

\begin{example} For any $\Set$ monad $T$, there is a distributive law $\epsilon^T \colon [T,E]$ defined by 
\begin{align}
& \epsilon^T \circ \inl T = T\eta^E \\ 
& \epsilon^T \circ \inr T = \eta^T E \circ \inr 
\end{align}
\end{example}

Moreover, for monads $S$ and $T$, the distributive laws $\epsilon^S$ and $\epsilon^T$ are coherent with each other in the sense of the next proposition.

\begin{proposition} \label{prop:yb_te}
Let $\lambda \colon [S,T]$ be a natural transformation such that either the~\eqref{ax:etap} or the~\eqref{ax:etam} axiom holds. Then the following Yang-Baxter equation holds:
\begin{equation} 
    \lambda E \circ T \epsilon^S \circ \epsilon^T S = S\epsilon^T \circ \epsilon^S T \circ E \lambda 
\end{equation}
\end{proposition}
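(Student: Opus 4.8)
The plan is to recognise this identity as an instance of the Yang--Baxter equation~\eqref{eq:yb} for the triple of laws $\epsilon^T \colon [T,E]$, $\lambda \colon [S,T]$, $\epsilon^S \colon [S,E]$ (with $E$ playing the role of the outermost monad), and to prove it by a case analysis on the incoming exception. Both sides are natural transformations $ETS \nto STE$. Since $E = 1 + \underline{1}$ as an endofunctor, with $\inl \colon 1 \nto E$ and $\inr \colon \underline{1} \nto E$ the two coproduct injections, a natural transformation out of $ETS = TS + \underline{1}$ is completely determined by its precompositions with $\inl TS$ and $\inr TS$. So it suffices to verify the equation separately in the ``value'' case (precompose with $\inl TS$) and the ``exception'' case (precompose with $\inr TS$). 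Throughout, the only ingredients are the two defining clauses of $\epsilon^S$ and $\epsilon^T$, the interchange law for horizontal composition, and a single unit axiom of $\lambda$.

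First I would treat the value case. Precomposing with $\inl TS$ and repeatedly using the clauses $\epsilon^T \circ \inl T = T\eta^E$ and $\epsilon^S \circ \inl S = S\eta^E$, together with interchange (to slide the various units $\eta^E$, $\eta^S$, $\eta^T$ past $\lambda$ and past one another), both composites collapse to $ST\eta^E \circ \lambda \colon TS \nto STE$. On this branch $\lambda$ is only transported by naturality, so \emph{no} axiom on $\lambda$ is required: the value case holds unconditionally.

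The exception case is where the hypotheses enter. Precomposing with $\inr TS$, the right-hand side loses all dependence on $\lambda$: the injection $\inr \colon \underline{1} \nto E$ absorbs it via $E\lambda \circ \inr TS = \inr ST$ (because $\underline{1}\lambda$ is an identity), and after applying the $\inr$-clauses $\epsilon^T \circ \inr T = \eta^T E \circ \inr$ and $\epsilon^S \circ \inr S = \eta^S E \circ \inr$ it reduces to $(\eta^S T \circ \eta^T)E \circ \inr$. On the left-hand side the same clauses produce $\lambda E \circ (\eta^T S \circ \eta^S)E \circ \inr$. Hence everything comes down to the single identity $\lambda \circ \eta^T S \circ \eta^S = \eta^S T \circ \eta^T$ of natural transformations $1 \nto ST$.

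The key observation -- and the only subtle point -- is why \emph{either} axiom suffices. By interchange the unit $\eta^T S \circ \eta^S$ coincides with $T\eta^S \circ \eta^T$ (these are two readings of the horizontal composite $\eta^T \star \eta^S$), and likewise $\eta^S T \circ \eta^T = S\eta^T \circ \eta^S$. Therefore, if~\eqref{ax:etam} holds I read the left-hand unit as $\eta^T S \circ \eta^S$ and compute $\lambda \circ \eta^T S \circ \eta^S = S\eta^T \circ \eta^S$; if instead~\eqref{ax:etap} holds I read it as $T\eta^S \circ \eta^T$ and compute $\lambda \circ T\eta^S \circ \eta^T = \eta^S T \circ \eta^T$. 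In both cases the outcome is the same transformation $\eta^S T \circ \eta^T$, which closes the exception case. The main obstacle is thus not the calculation but spotting this flexibility in reading the composite unit, which is exactly what lets one weaken the requirement from ``both axioms'' to ``either axiom''.
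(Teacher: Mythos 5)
Your proof is correct. The paper states Proposition~\ref{prop:yb_te} without giving an explicit proof, so there is no official argument to compare against line by line; but your route is clearly the intended one, and it is consistent with the paper's remark that the result extends to any category with binary coproducts and any exception monad $EX = X + e$: the only structure you use is that $E = 1 + \underline{1}$ is a coproduct of functors (pointwise in $\Set$), so that a natural transformation out of $ETS$ is determined by precomposition with $\inl TS$ and $\inr TS$, plus the two defining clauses of $\epsilon^S$, $\epsilon^T$ and interchange. Both branches check out: on the value branch the clauses $\epsilon^T \circ \inl T = T\eta^E$, $\epsilon^S \circ \inl S = S\eta^E$ and naturality collapse both sides to $ST\eta^E \circ \lambda$ with no hypothesis on $\lambda$ (note that only $\eta^E$ actually needs to be slid there, not $\eta^S$ or $\eta^T$ -- a harmless imprecision in your write-up); on the exception branch, $E\lambda \circ \inr TS = \inr ST$ kills $\lambda$ on the right-hand side, and the whole equation reduces to the unit-compatibility identity $\lambda \circ \eta^T S \circ \eta^S = \eta^S T \circ \eta^T$, i.e.\ $\lambda \circ \eta^T\eta^S = \eta^S\eta^T$ on horizontal composites of units. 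Your key observation is exactly the right one and explains the disjunctive hypothesis: reading the composite unit as $\eta^T S \circ \eta^S$ lets~\eqref{ax:etam} finish, reading it as $T\eta^S \circ \eta^T$ lets~\eqref{ax:etap} finish, and either way the outcome is $\eta^S\eta^T$.
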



\begin{remark}
    The above proposition still holds true when $E$ implements multiple exceptions. More generally, it can be generalised to any category with binary coproducts and any exception monad of the form $EX = X + e$, where $e$ is an object of exceptions.
\end{remark}    

Consequently, by applying Theorem~\ref{theo:comp_equal}, given any weak distributive law $[S,T]$, one can safely compose the resulting monad with the exception monad $E$ and get a monad $S \bullet T \circ E$. We give two explicit examples.
\begin{itemize} 
    \item Applying Theorem~\ref{theo:second_comp} to the weak distributive law $\lambda \colon [P,D]$ from Example~\ref{ex:law_pd}, we derive a new weak distributive law $\lambda E \circ D\epsilon^P \colon [P,D_{\leq}]$ between the powerset monad $P$ and the \emph{subdistribution monad} $D_{\leq} = D \circ E$. The expression of this new law is as in~\eqref{eq:pd_def}.
    \item Remarking that $P = \comp{P_*}{}{E}$ where $P_*$ is the \emph{non-empty} powerset monad, we identify the weak distributive law $\lambda \colon [P,P]$ from Example~\ref{ex:law_pp} as an iterated law built from $\epsilon^P$, $\epsilon^{P_*}$, and a weak distributive law of type $[P,P_*]$ whose expression is as in~\eqref{eq:pp_def}.
\end{itemize}

\subsubsection{Iterating with the Reader}

Recall that $A$ is a set of labels and the reader monad in $\Set$ is defined as $RX = X^A$. For any $a \in A$, define the function $\hat{a}_X \colon RX \to X$ by $\hat{a}_X(h) = h(a)$.

\begin{example} \label{ex:rt} For any $\Set$ monad $T$, there is a distributive law $\rho^T \colon [R,T]$ defined for any set $X$ and any $z \in TRX$ by
\begin{equation}
    \rho^T_X(z) \triangleq a \mapsto T\hat{a}_X(z)
\end{equation}
\end{example}

Moreover, for monads $S$ and $T$, the distributive laws $\rho^S$ and $\rho^T$ are coherent with each other in the sense of the next proposition.

\begin{proposition} \label{prop:yb_rt}
Let $\lambda \colon [S,T]$ be a natural transformation. Then the following Yang-Baxter equation holds:
\begin{equation} 
\rho^S T \circ S \rho^T \circ \lambda R = R \lambda \circ \rho^T S \circ T \rho^S
\end{equation}
\end{proposition}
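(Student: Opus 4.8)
The plan is to prove Proposition~\ref{prop:yb_rt} by a pointwise computation in $\Set$, unwinding both sides of the Yang-Baxter equation on an arbitrary element and using the explicit formula $\rho^T_X(z) = a \mapsto T\hat{a}_X(z)$ from Example~\ref{ex:rt}. Since the reader monad is $RX = X^A$ and $\rho$ is defined uniformly for every $\Set$ monad via the evaluation maps $\hat{a}$, the key observation I expect to exploit is that $\hat{a}$ is natural in the underlying set, so that $\hat{a}$ commutes with any functor application and, crucially, interacts well with an arbitrary $\lambda \colon [S,T]$ through naturality. The claim is really a statement about evaluating at labels $a \in A$ commuting past the crossing $\lambda$.

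First I would fix a set $X$ and chase an element $w \in TSR X$ (the domain of both composites, since the left side is $TSR \to SRT \to STR$-type and unwinds to $STR X \to \cdots$; I would carefully pin down the common domain and codomain, namely $T(S(RX)) \to R(S(TX))$ up to the placement dictated by the types $\rho^S T \circ S\rho^T \circ \lambda R$ and $R\lambda \circ \rho^T S \circ T\rho^S$). Then I would apply each of the three natural transformations on both sides in turn, repeatedly using the defining equation of $\rho$ to replace an application of $\rho^{(-)}$ by the family $a \mapsto (-)\hat{a}$. The heart of the argument is that on both sides one ends up, after all three steps, with an expression indexed by a pair $(a,b) \in A \times A$ — or really by a single $a$ once the reader multiplications are absorbed — of the form (some functorial image of $\hat{a}$ applied to $\lambda$ of the input). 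The two sides then coincide because $\lambda$ is natural with respect to the maps $\hat{a}_X \colon RX \to X$: naturality of $\lambda$ against $\hat{a}$ lets one slide the evaluation past the crossing, which is exactly what turns $\rho^S T \circ S\rho^T$ on one side into $\rho^T S \circ T\rho^S$ on the other.

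The main obstacle I anticipate is bookkeeping: keeping track of which copy of $R$ (which label variable) is being evaluated at each stage, and ensuring that the functorial actions $T\hat{a}$, $S\hat{a}$, and their composites are applied in the correct order and at the correct argument. Because $\rho$ introduces a label-indexed family each time it is applied, and here it is applied to both $S$ and $T$ on each side, I would need to be meticulous that the double occurrence collapses consistently on both sides. Unlike the previous coherence propositions, this one requires \emph{no} axiom on $\lambda$ at all (the statement only assumes $\lambda$ is a natural transformation), so the entire proof must go through by naturality alone — this is a useful sanity check: if I ever find myself needing \eqref{ax:etap}, \eqref{ax:etam}, or a monad law, I have made an error.

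Concretely, the cleanest route is to verify the equality by composing with the projections $\widehat{a} \colon R(S(TX)) \to S(TX)$ for each $a \in A$, since a morphism into $R(STX) = (STX)^A$ is determined by its $A$-many components. Precomposing both sides with such an evaluation reduces the $R\lambda$ and $\rho^T S$ (resp. $\rho^S T$) outer layers to honest applications of $\hat{a}$, after which the equation becomes a single naturality square for $\lambda$ applied to the natural transformation $\hat{a} \colon R \nto 1$. I would write this out as the chain
\begin{equation}
\widehat{a}(ST) \circ \text{(LHS)} = \lambda \circ \widehat{a}(ST)\text{-rearrangement} = S\lambda\text{-naturality} = \widehat{a}(ST) \circ \text{(RHS)},
\end{equation}
and since this holds for every $a \in A$, the two natural transformations agree. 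I expect the whole proof to fit in under a page once the indexing conventions are fixed, with the naturality of $\lambda$ against $\hat{a}$ carrying all the real content.
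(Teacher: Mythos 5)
Your proposal is correct. The paper in fact omits any proof of Proposition~\ref{prop:yb_rt} (none of the Yang--Baxter propositions of Section~\ref{subsec:examples} are proved in the appendices), so there is nothing to compare against line by line; your argument is the canonical one and it goes through. Concretely: the family $\hat{a}_{STX} \colon R(STX) \to STX$, $a \in A$, is jointly monic, the law $\rho^T$ is characterised by $\hat{a}T \circ \rho^T = T\hat{a}$ (and likewise $\hat{a}S \circ \rho^S = S\hat{a}$), and postcomposing both sides of the Yang--Baxter equation with $\hat{a}ST$ collapses each to $\lambda \circ TS\hat{a}$ --- the left side via two applications of the characterisation and then naturality of $\lambda$ at the map $\hat{a}_X$, the right side via naturality of $\hat{a}$ at the map $\lambda_X$ and then the two characterisations. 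Your sanity check is also correct: only naturality is used, never a monad axiom on $\lambda$, which is exactly why the proposition can be stated for an arbitrary natural transformation $\lambda \colon [S,T]$. The one point to be careful about when writing it up is that ``naturality of $\lambda$ against $\hat{a}$'' really bundles two distinct squares (naturality of $\lambda$ at the morphism $\hat{a}_X$, and naturality of $\hat{a}$ at the morphism $\lambda_X$), i.e.\ the interchange law for the horizontal composite $\lambda\hat{a}$; both are needed, one on each side of the equation.
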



Applying Theorem~\ref{theo:comp_equal} with \textit{e.g.} the weak $\lambda \colon [P,D]$ from Example~\ref{ex:law_pd}, we deduce that the monad $\comp{R}{}{\wcomp{P}{}{D}}$ can be obtained either using the distributive law $\psi = \rho^{\wcomp{P}{}{D}} : [R,\wcomp{P}{}{D}]$ or using the weak distributive law $\phi \triangleq R\lambda \circ \rho^D P \colon [\comp{R}{}{P},D]$ given by the expression 
\begin{equation}
     \phi_X \left( \sum_{h \in (PX)^A} p_h \cdot h \right) = a \mapsto \left\{ \sum_{h \in (PX)^A} p_h q^h \mid q^h \in DX, \, \supp(q^h) \subseteq h(a) \right\}
\end{equation}

\subsubsection{Iterating with the Writer}

Recall that $M$ is a monoid and the writer monad in $\Set$ is defined as $MX = M \times X$. For any $m \in M$, let $\overline{m}_X \colon X \to M \times X$ be defined by $\overline{m}_X(x) = (m,x)$.

\begin{example} \label{ex:tw} For any $\Set$ monad $T$, there is a distributive law $\omega^T \colon [T,W]$ defined for any set $X$ and any $(m,t) \in WTX$ by 
\begin{equation}
    \omega^T_X(m,t) \triangleq T\overline{m}_X(t)
\end{equation}
\end{example}

Moreover, for monads $S$ and $T$, the distributive laws $\omega^S$ and $\omega^T$ are coherent with each other in the sense of the next proposition.

\begin{proposition} \label{prop:yb_tw}
Let $\lambda \colon [S,T]$ be a natural transformation. Then the following Yang-Baxter equation holds:
\begin{equation} 
\lambda W \circ T\omega^S \circ \omega^T S = S\omega^T \circ \omega^S T \circ W\lambda
\end{equation}
\end{proposition}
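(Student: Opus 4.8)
The plan is to verify the equation pointwise in $\Set$, since $W$, $\omega^S$ and $\omega^T$ are all explicitly given concrete data. Both composites have type $WTS \nto STW$, so I would evaluate each at an arbitrary element $(m,t) \in WTSX = M \times TSX$ and check that they return the same element of $STWX$. The entire argument rests on two elementary identities that fall straight out of the definition $\omega^T_X(m,t) = T\overline{m}_X(t)$ and of $\overline{m}_X(x) = (m,x)$, namely
\[
  \omega^S_X \circ \overline{m}_{SX} = S\overline{m}_X \colon SX \to SWX, \qquad \omega^T_X \circ \overline{m}_{TX} = T\overline{m}_X \colon TX \to TWX.
\]
Each expresses that first inserting the label $m$ and then distributing is the same as functorially pairing every leaf with $m$; both are immediate by unwinding the definitions.

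Next I would run the left-hand side $\lambda W \circ T\omega^S \circ \omega^T S$ on $(m,t)$. Applying $\omega^T S$ produces $T\overline{m}_{SX}(t)$, and then $T\omega^S$ together with functoriality of $T$ and the first identity collapses the two steps into $T(\omega^S_X \circ \overline{m}_{SX})(t) = TS\overline{m}_X(t)$; hence the left-hand side equals $\lambda_{WX}\bigl(TS\overline{m}_X(t)\bigr)$. Symmetrically, on the right-hand side $S\omega^T \circ \omega^S T \circ W\lambda$, the factor $W\lambda$ yields $(m,\lambda_X(t))$, then $\omega^S T$ yields $S\overline{m}_{TX}(\lambda_X(t))$, and $S\omega^T$ with functoriality of $S$ and the second identity collapses to $S(\omega^T_X \circ \overline{m}_{TX})(\lambda_X(t)) = ST\overline{m}_X(\lambda_X(t))$.

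It then remains only to match $\lambda_{WX} \circ TS\overline{m}_X$ with $ST\overline{m}_X \circ \lambda_X$, which is exactly the naturality square of $\lambda \colon TS \nto ST$ instantiated at the morphism $\overline{m}_X \colon X \to WX$. This closes the argument. Notably, it uses nothing about $\lambda$ beyond naturality, in agreement with the statement imposing no distributive-law axiom on $\lambda$ (in contrast to Proposition~\ref{prop:yb_te}, where $\eqref{ax:etap}$ or $\eqref{ax:etam}$ was needed). The only genuine obstacle is bookkeeping: tracking the object at which each component of $\omega^S$ and $\omega^T$ is taken under the whiskerings $\omega^T S$, $T\omega^S$, $\omega^S T$, $S\omega^T$, and recognising that the two ``functoriality-plus-insertion'' collapses are precisely what reduce both sides to a single naturality square. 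Everything else is routine, and the computation could equally be rendered string-diagrammatically by sliding the two $\omega$-crossings past the $\lambda$-crossing.
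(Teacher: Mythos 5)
Your proof is correct. The paper states Proposition~\ref{prop:yb_tw} without an explicit proof (it is left as a routine verification), and your pointwise argument --- collapsing both sides via the insertion identities $\omega^S_X \circ \overline{m}_{SX} = S\overline{m}_X$ and $\omega^T_X \circ \overline{m}_{TX} = T\overline{m}_X$ to the single naturality square of $\lambda$ at $\overline{m}_X \colon X \to WX$ --- is exactly that verification, correctly using nothing about $\lambda$ beyond naturality, as the statement requires.
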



Applying Theorem~\ref{theo:comp_equal} with \textit{e.g.} the weak $\lambda : [P,D]$ from Example~\ref{ex:law_pd}, we deduce that the monad $\comp{\wcomp{P}{}{D}}{}{W}$ can be obtained either using the distributive law $\phi = \omega^{\wcomp{P}{}{D}} : [\wcomp{P}{}{D},W]$ or using the weak distributive law $\psi \triangleq \lambda W \circ T \omega^S : [P,\comp{D}{}{W}]$ given by the expression 
\begin{equation}
    \psi_X \left( \sum_{i,j} p_{i,j} \cdot (m_i,U_j) \right) = \left\{ \sum_{i,j} p_{i,j} q^{i,j} \mid q^{i,j} \in DWX, \, \supp(q^{i,j}) \subseteq \{(m_i,x) \mid x \in U_j \} \right\}
\end{equation}

\subsection{Algebras as Distributive Laws, and Iterating them} \label{subsec:outputs}

In this short section we sketch how Eilenberg-Moore algebras are a particular kind of distributive laws, and how in this view $\lambda$-algebras are precisely Yang-Baxter equations. Given a monad $T$, an \emph{Eilenberg-Moore algebra} -- or \emph{$T$-algebra} for short -- is a pair $(A,t)$ where $A$ is an object of $\C$ and $t \colon TA \to A$ is a morphism such that $t \circ \eta^T_A = 1_A$ (unitality) and $t \circ \mu^T_A = t \circ Tt$ (associativity). A morphism of $T$-algebras $f \colon (A,t) \to (B,u)$ is a morphism $f \colon A \to B$ in $\C$ such that $f \circ t = u \circ Tf$.
For any functor $F \colon \C \to \C$, an \emph{Eilenberg-Moore distributive law} -- or \emph{EM-law} for short -- is a natural transformation $\lambda \colon [F,T]$ such that the~\eqref{ax:etam} and~\eqref{ax:mum} axioms are satisfied.
Let $S$ and $T$ be two monads and $\lambda \colon [S,T]$ be a natural transformation. A \emph{$\lambda$-algebra} is a triple $(A,s,t)$ such that $(A,s)$ is an $S$-algebra, $(A,t)$ is a $T$-algebra, and the following condition is satisfied:
\begin{equation} 
s \circ St \circ \lambda_X = t \circ Ts \tag{$\lambda$-condition} \label{eq:lambda_cond}
\end{equation}
A morphism of $\lambda$-algebras is a simultaneous $S$-algebra and $T$-algebra morphism.

The following result is widely known:
\begin{theorem}[\cite{Beck69,Garner20}] \label{theo:algebra_iso}
Let $\lambda \colon [S,T]$ be a natural transformation.
\begin{itemize} 
\item If $\lambda$ is a distributive law, then the categories of $(\comp{S}{\lambda}{T})$-algebras and $\lambda$-algebras are isomorphic.
\item If $\lambda$ is a weak distributive law, then the categories of $(\wcomp{S}{\lambda}{T})$-algebras and $\lambda$-algebras are isomorphic.
\end{itemize}
\end{theorem}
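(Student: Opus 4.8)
The plan is to construct, in each case, mutually inverse functors between the two categories and to verify that they preserve morphisms, thereby exhibiting an isomorphism of categories. I would treat the weak case as the general one: since a distributive law is exactly a weak distributive law for which~\eqref{ax:etam} additionally holds, in that case $\kappa = 1_{ST}$, the splitting is trivial ($K = ST$, $\pi = \iota = 1_{ST}$), and the distributive statement drops out by erasing every occurrence of $\kappa$, $\pi$, $\iota$. On objects the two assignments are $(A,s,t)\mapsto(A,\bar h)$ with $\bar h \triangleq s\circ St\circ\iota \colon KA\to A$, and $(A,\bar h)\mapsto(A,s,t)$ with $s\triangleq\bar h\circ\pi\circ S\eta^T$ and $t\triangleq\bar h\circ\pi\circ\eta^S T$; throughout it is convenient to work with the intermediate $h\triangleq\bar h\circ\pi\colon STA\to A$, which encodes an $ST$-indexed map satisfying $h\circ\kappa = h$.

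First I would pass from $\lambda$-algebras to $(\wcomp{S}{\lambda}{T})$-algebras. Given a $\lambda$-algebra $(A,s,t)$, set $h\triangleq s\circ St$. The computational heart is that $h$ is compatible with the idempotent, $h\circ\kappa = h$, which is what allows $h$ to factor through the splitting as $\bar h = h\circ\iota$ with $h = \bar h\circ\pi$. Unfolding $\kappa = S\mu^T\circ\lambda T\circ\eta^T ST$, one rewrites $St\circ S\mu^T = S(t\circ\mu^T) = S(t\circ Tt) = St\circ STt$ using $T$-associativity of $t$; slides $STt$ past $\lambda$ by naturality of $\lambda$, turning $STt\circ\lambda T$ into $\lambda\circ T(St)$; invokes~\eqref{eq:lambda_cond} to replace $s\circ St\circ\lambda$ by $t\circ Ts$; recognises $Ts\circ T(St) = T(s\circ St) = Th$ by functoriality of $T$; uses naturality of $\eta^T$ to rewrite $Th\circ\eta^T$ as $\eta^T\circ h$; and concludes with $T$-unitality $t\circ\eta^T = 1$, so that $h\circ\kappa = t\circ\eta^T\circ h = h$. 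It then remains to check that $\bar h$ is a genuine $K$-algebra: unitality follows from $h\circ\eta^S\eta^T = 1$ (a one-line consequence of the $S$- and $T$-unit laws) together with~\eqref{eq:kappa_eta}, while associativity is the longer verification, drawing on~\eqref{eq:kappa_mu}, the retract identity $\pi\circ\iota = 1_K$, and the algebra axioms for $s$ and $t$.

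Conversely, from a $(\wcomp{S}{\lambda}{T})$-algebra $(A,\bar h)$ I set $h\triangleq\bar h\circ\pi$ and then $s\triangleq h\circ S\eta^T$, $t\triangleq h\circ\eta^S T$, and check that $(A,s,t)$ is a $\lambda$-algebra: $S$- and $T$-unitality and associativity together with~\eqref{eq:lambda_cond} all follow from the $K$-algebra axioms for $\bar h$, the weak-law axioms~\eqref{ax:etap},~\eqref{ax:mup},~\eqref{ax:mum}, and the relations of Lemma~\ref{lem:four_eq}. That the two assignments are mutually inverse splits into two round trips. Starting from a $\lambda$-algebra, the return trip recovers $h$ via $h\circ\iota\circ\pi = h\circ\kappa = h$ (the compatibility just proved), then recovers $s$ from $s\circ St\circ S\eta^T = s\circ S(t\circ\eta^T) = s$ and $t$ from $s\circ St\circ\eta^S T = s\circ\eta^S\circ t = t$, using the $T$- and $S$-unit laws and naturality of $\eta^S$. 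Starting from a $K$-algebra, the return trip must re-establish $s\circ St = h$, which is precisely the associativity of $\bar h$ read through the splitting, after which $\bar h$ is recovered by $s\circ St\circ\iota = h\circ\iota = \bar h\circ\pi\circ\iota = \bar h$. Finally, since a $\lambda$-algebra morphism is by definition simultaneously an $S$- and a $T$-algebra morphism, one checks directly that this is equivalent to being a $K$-algebra morphism for the associated structures, so both assignments are functorial and inverse to one another.

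I expect the main obstacle to be bookkeeping in the weak case rather than any conceptual difficulty: because~\eqref{ax:etam} is unavailable, the idempotent $\kappa$ (equivalently the pair $\pi,\iota$) can no longer be erased and must be threaded through every computation, in particular when verifying the multiplication law of $\wcomp{S}{\lambda}{T}$ for $\bar h$ and when re-deriving $s\circ St = h$. These are exactly the places where equations~\eqref{eq:kappa_idem}--\eqref{eq:kappa_mu} are invoked to commute $\kappa$ with units and multiplications, and where the string-diagrammatic calculus of Section~\ref{sec:laws} keeps the manipulations legible; I would carry them out diagrammatically. The distributive case then follows at once by setting $\kappa = 1_{ST}$.
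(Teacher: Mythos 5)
This statement is quoted in the paper as a known result, citing Beck for the strict case and Garner for the weak case; the paper itself contains no proof of it, so there is no internal proof to compare yours against. On its own terms, your proposal is correct and is essentially the standard direct argument, executed in the idiom of Section~\ref{sec:laws}: from a $\lambda$-algebra $(A,s,t)$ you form $h \triangleq s \circ St$, prove the key compatibility $h \circ \kappa = h$ (your step-by-step computation of this is exactly right), and let $h$ descend through the splitting; conversely you restrict a $(\wcomp{S}{\lambda}{T})$-algebra along $\pi \circ S\eta^T$ and $\pi \circ \eta^S T$; the round trips and the morphism correspondence work as you describe. This does differ from the route in the cited source: Garner obtains the isomorphism by factoring through the correspondence between weak distributive laws and weak liftings, identifying algebras for the weak composite with algebras for the lifted monad on an Eilenberg--Moore category. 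Your direct construction buys self-containedness and matches the present paper's philosophy of computing with $\pi$, $\iota$, $\kappa$ explicitly; the lifting route buys conceptual economy, since the algebra axioms then come from general facts about liftings rather than from computation.

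One warning about the place where your sketch is thinnest. In the backward direction you assert that the $S$- and $T$-algebra axioms for $s \triangleq h \circ S\eta^T$ and $t \triangleq h \circ \eta^S T$ ``follow from the $K$-algebra axioms, the weak-law axioms and Lemma~\ref{lem:four_eq}''. This is true, but it is not mere bookkeeping: if one attempts to verify $s \circ \mu^S = s \circ Ss$ using only naturality, \eqref{ax:etap}, \eqref{ax:mup} and the relations of Lemma~\ref{lem:four_eq}, the computation goes in circles --- both sides reduce to the same pair of expressions and back, via the identities $\kappa \circ S\eta^T = \lambda \circ \eta^T S$ and the $K$-associativity read through $\pi$. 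The axiom \eqref{ax:mum} enters essentially at this point. The clean way to close it is to invoke the fact, stated in Section~\ref{sec:laws}, that $\pi \circ S\eta^T \colon S \nto \wcomp{S}{\lambda}{T}$ and $\pi \circ \eta^S T \colon T \nto \wcomp{S}{\lambda}{T}$ are monad morphisms (restriction of an algebra along a monad morphism is automatically an algebra); the multiplication axiom for $\pi \circ S\eta^T$ is exactly where \eqref{ax:mum} is consumed. With that in hand, only the \eqref{eq:lambda_cond} needs a bespoke computation --- it follows from the $K$-associativity transported through $\pi$, together with \eqref{ax:etap}, naturality, and unit laws --- and your outline is complete.
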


Let $A$ be an object of $\C$ and let $\underline{A} \colon \C \to \C$ be the constant functor $\underline{A}X \triangleq A$, $\underline{A}f \triangleq 1_A$.

\begin{proposition} \label{prop:output_1}
Let $t \colon TA \to A$ be a morphism. The natural transformation $[t] \colon [\underline{A},T]$ defined by $[t]_X = t$ is an EM-law if and only if $t$ is a $T$-algebra.
\end{proposition}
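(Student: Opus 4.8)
The plan is to unfold the definition of an EM-law directly, exploiting the degeneracy forced by the constant functor $\underline{A}$. First observe that $[t]$ is automatically a natural transformation of type $[\underline{A},T]$, i.e. $T\underline{A} \nto \underline{A}T$: naturality at a morphism $f \colon X \to Y$ reads $[t]_Y \circ (T\underline{A})(f) = (\underline{A}T)(f) \circ [t]_X$, and since $\underline{A}$ sends every morphism to $1_A$ we have $(T\underline{A})(f) = T(1_A) = 1_{TA}$ and $(\underline{A}T)(f) = \underline{A}(Tf) = 1_A$, so the square collapses to $t = t$. Hence $[t]$ is always natural, and being an EM-law reduces to checking the two axioms \eqref{ax:etam} and \eqref{ax:mum}. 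The strategy is to show that, component by component, these become exactly the unit and associativity laws of a $T$-algebra.

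For \eqref{ax:etam}, specialising $S = \underline{A}$ and $\lambda = [t]$ gives $[t] \circ \eta^T\underline{A} = \underline{A}\eta^T$. Evaluating at $X$, right-whiskering by $\underline{A}$ merely reads off the component at $A$, so the left side is $[t]_X \circ \eta^T_A = t \circ \eta^T_A$; the right side is $\underline{A}(\eta^T_X) = 1_A$, because left-whiskering by $\underline{A}$ collapses any morphism to $1_A$. Thus \eqref{ax:etam} holds if and only if $t \circ \eta^T_A = 1_A$, which is precisely the unit law of a $T$-algebra.

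For \eqref{ax:mum}, the axiom reads $[t] \circ \mu^T\underline{A} = \underline{A}\mu^T \circ [t]T \circ T[t]$. Evaluating at $X$, the left side is $t \circ \mu^T_A$. On the right, $(T[t])_X = T([t]_X) = Tt$, and $([t]T)_X = [t]_{TX} = t$ since $[t]$ is constant, while $(\underline{A}\mu^T)_X = \underline{A}(\mu^T_X) = 1_A$; composing gives $t \circ Tt$. Hence \eqref{ax:mum} holds if and only if $t \circ \mu^T_A = t \circ Tt$, which is precisely the associativity law of a $T$-algebra.

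Combining the two equivalences, $[t]$ is an EM-law exactly when both the unit and associativity laws hold, i.e. exactly when $(A,t)$ is a $T$-algebra. There is no genuine obstacle here; the only point requiring care is the bookkeeping of how the constant functor $\underline{A}$ behaves under whiskering — on the right it evaluates components at $A$, whereas on the left it sends every morphism (in particular $\eta^T_X$ and $\mu^T_X$) to $1_A$ — and it is exactly this asymmetry that makes the two EM-law axioms degenerate onto the algebra laws.
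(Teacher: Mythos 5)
Your proof is correct and follows essentially the same route as the paper: it checks naturality of $[t]$ (collapsing to $t = t$ via the constant functor) and then unfolds the \eqref{ax:etam} and \eqref{ax:mum} axioms componentwise, where the whiskering conventions force them to degenerate exactly onto the unit law $t \circ \eta^T_A = 1_A$ and the associativity law $t \circ Tt = t \circ \mu^T_A$. The only difference is that you spell out the bookkeeping of left versus right whiskering by $\underline{A}$ more explicitly than the paper does, which is a matter of exposition, not substance.
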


\begin{proof}
Naturality of $[t]$ amounts to the obvious equation $1_A \circ t = t \circ T(1_A)$. The~\eqref{ax:etam} axiom of $[t]$ is equivalent to the equation $[t]_X \circ \eta^T_{\underline{A}X} = \underline{A} \eta^T_X$ for all objects $X$, which simplifies into the unitality axiom $t \circ \eta^T_A = 1_A$. The~\eqref{ax:mum} axiom of $[t]$ is equivalent to the equation $\underline{A}\mu^T_X \circ [t]_{TX} \circ T[t]_X = [t]_X \circ \mu^T_{\underline{A}X}$ for all objects $X$, which simplifies into the associativity axiom $t \circ Tt = t \circ \mu^T_A$.
\end{proof}

\begin{proposition} \label{prop:output_2}
Let $(A,s)$ be an $S$-algebra, $(A,t)$ be a $T$-algebra, and $\lambda \colon [S,T]$ be a natural transformation. Then $(A,s,t)$ satisfies the~\eqref{eq:lambda_cond} if and only if the Yang-Baxter equation holds between $[s] \colon [\underline{A},S]$, $[t] \colon [\underline{A},T]$ and $\lambda \colon [S,T]$.
\end{proposition}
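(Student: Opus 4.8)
The plan is to unpack the Yang-Baxter equation for the three laws $[s] \colon [\underline{A},S]$, $[t] \colon [\underline{A},T]$ and $\lambda \colon [S,T]$, and to check that it reduces, component by component, to the single equation~\eqref{eq:lambda_cond}. Matching the statement against the template~\eqref{eq:yb}, the constant functor $\underline{A}$ plays the role of the monad $R$, the law $[s]$ plays the role of $\sigma \colon [R,S]$, the law $[t]$ plays the role of $\tau \colon [R,T]$, and $\lambda$ keeps its role. The equation to analyse is therefore $[s]T \circ S[t] \circ \lambda\underline{A} = \underline{A}\lambda \circ [t]S \circ T[s]$.

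First I would evaluate each whiskered factor at an arbitrary object $X$, using the defining identities $[s]_Y = s$ and $[t]_Y = t$ for every $Y$ together with the constancy $\underline{A}f = 1_A$ for every morphism $f$. On the left, right-whiskering gives $(\lambda\underline{A})_X = \lambda_A \colon TSA \to STA$, left-whiskering gives $(S[t])_X = St \colon STA \to SA$, and right-whiskering gives $([s]T)_X = s \colon SA \to A$; so the left-hand side at $X$ is $s \circ St \circ \lambda_A$. On the right, $(T[s])_X = Ts$, $([t]S)_X = t$, and, crucially, $(\underline{A}\lambda)_X = \underline{A}(\lambda_X) = 1_A$, since the constant functor annihilates the whole family $\lambda$; so the right-hand side at $X$ is $1_A \circ t \circ Ts = t \circ Ts$.

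Comparing both sides then yields exactly $s \circ St \circ \lambda_A = t \circ Ts$, which is precisely~\eqref{eq:lambda_cond} (read at the carrier $A$). Because everything has been routed through the constant functor $\underline{A}$ -- whose value $A$ and whose action $1_A$ do not depend on $X$ -- the family of Yang-Baxter equations indexed by $X$ is a single equation, holding for all $X$ precisely when it holds at one. Hence the Yang-Baxter equation between $[s]$, $[t]$, $\lambda$ and the~\eqref{eq:lambda_cond} are literally the same equality, and both implications follow at once.

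There is essentially no obstacle here: the computation is purely a matter of bookkeeping whiskerings and types. The only point deserving attention is the identity $(\underline{A}\lambda)_X = 1_A$, where the outer copy of $\lambda$ is collapsed by the constant functor; this is exactly what produces the asymmetry between the two sides of~\eqref{eq:lambda_cond} and removes $\lambda$ from the right-hand factor. Everything else is routine evaluation.
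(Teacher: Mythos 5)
Your proof is correct and is essentially identical to the paper's own argument: both unpack the Yang--Baxter equation componentwise at an arbitrary object $X$, use $[s]_Y = s$, $[t]_Y = t$, and the collapse $(\underline{A}\lambda)_X = 1_A$ to reduce it to $s \circ St \circ \lambda_A = t \circ Ts$, i.e.\ the~\eqref{eq:lambda_cond}. Your explicit remark about the constant functor annihilating $\lambda$ on the right-hand side is exactly the (implicit) key step in the paper's one-line simplification.
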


\begin{proof}
The Yang-Baxter equation is $[s]_{TX} \circ S[t]_X \circ \lambda_{\underline{A}X} = \underline{A} \lambda_X \circ [t]_{SX} \circ T[s]_X$ for all objects $X$, which simplifies into the~\eqref{eq:lambda_cond} $s \circ St \circ \lambda_A = 1_A \circ t \circ Ts$.
\end{proof}

Therefore 
\begin{itemize}
    \item the functor $\lambda$-algebras $\to (\comp{S}{\lambda}{T})$-algebras from Theorem~\ref{theo:algebra_iso} is a particular case of Theorem~\ref{theo:cheng}; 
    \item the functor $\lambda$-algebras $\to (\wcomp{S}{\lambda}{T})$-algebras from Theorem~\ref{theo:algebra_iso} is a particular case of Theorem~\ref{theo:second_comp}.
\end{itemize} 

Let us illustrate \textit{e.g.} the second point. Let $\lambda \colon [S,T]$ be a weak distributive law and $(A,s,t)$ be a $\lambda$-algebra. According to Propositions~\ref{prop:output_1} and~\ref{prop:output_2}, we are in the presence of a Yang-Baxter equation with one weak distributive law $\lambda$ and two EM-laws $[s]$ and $[t]$. By applying a mild variation of Theorem~\ref{theo:second_comp} -- consisting in forgetting the monad structure of the functor $R = \underline{A}$ -- we get an EM-law $\psi = \underline{A} \pi^\lambda \circ [s]T \circ S[t] \circ \iota^\lambda \underline{A} \colon [\underline{A},\wcomp{S}{\lambda}{T}]$. That law satisfies $\psi_X = s \circ St \circ \iota^\lambda_A$ for all objects $X$. Via Proposition~\ref{prop:output_1}, we get that $s \circ St \circ \iota^\lambda_A$ is an $(\wcomp{S}{\lambda}{T})$-algebra. As expected, this procedure amounts to the action of the functor $\lambda$-algebras $\to (\wcomp{S}{\lambda}{T})$-algebras, that is, $(A,s,t) \mapsto s \circ St \circ \iota^\lambda_A$.

\section{Discussion and Future Work} \label{sec:conclusion}

\subsection{Advocating String Diagrams} \label{subsec:advocating}

One of the aims of this paper is to advocate string diagrams for proof research and communication. Let us provide two more examples that support our position.

\subsubsection{Cheng's theorem} \label{subsec:cheng}
Let us begin with Cheng's result~\cite[Theorem~1.6]{Cheng11a}, proved in~\cite[Appendix~A]{Cheng11a}. The largest commutative diagram occurring in the proof for the initialisation case with $3$ monads comprises $12$ elementary commutative polygons, out of which $9$ are naturality squares. Therefore, there are only $3$ significant steps in the underlying proof. In an equivalent string-diagrammatic representation of the proof, these three steps are remarkably well identified:
\begin{equation} \label{eq:cheng_diagram}
    \tikzfig{0_cheng}
\end{equation}
That computation (more precisely, its symmetric) has been generalised to weak distributive laws in the present paper: compare diagrams~\eqref{eq:cheng_diagram} and~\eqref{eq:cheng_returns}.

\subsubsection{Winter's theorem} \label{subsec:winter}

Another striking example arises from~\cite{Winter16}. As already mentioned in the introduction, the proof of~\cite[Theorem~5]{Winter16} required a Prolog program to find a convenient tiling of the largest required commutative diagram, depicted in \cite[Appendix~A]{Winter16}. That diagram comprises $13+10=23$ elementary commutative polygons, out of which $15$ are naturality squares. Therefore, there are only $8$ significant steps in the underlying proof. An equivalent string-diagrammatic formalisation of the diagram in question is presented in Appendix~\ref{app:conclusion}. As expected, it contains only $8$ steps, and the computation can be carried out in one single line. Winter's proof research was obfuscated by what Hinze and Marsden call \emph{bookkeeping}~\cite{Hinze16}, \textit{i.e.}, administrative operations devoid of significant mathematical content.

\subsection{Conclusion} \label{subsec:conclusion}
This paper generalises iteration to weak distributive laws using the graphical calculus of string diagrams. It is an additional step in developing the theory of weak distributive laws, as is our line of research in~\cite{Goy20,Goy21,GoyTh}. The paper revolves around one main theoretical result, split in Theorems~\ref{theo:first_comp},~\ref{theo:second_comp} and~\ref{theo:comp_equal}, and provides several examples. A possible extension would be to derive  -- in the style of Section~\ref{subsec:examples} -- a generic Yang-Baxter equation stating compatibility of \emph{monotone} laws. Monotone laws are a certain class of laws of type $[P,T]$ and often carry significant semantic content. This direction is particularly interesting, as all known non-trivial examples of weak distributive laws are based on monotone laws. One may also ask whether the existence of a composite law entails the Yang-Baxter equation -- this question has been positively solved by Winter in the case of distributive laws~\cite[Proposition~11]{Winter16}.

As hinted in Section~\ref{subsec:outputs}, our results easily adapt to EM-laws (and their duals, called Kleisli-laws), \textit{i.e.}, distributive laws between a monad and a \emph{functor}. This paves the way for applications of iterated laws in coalgebra theory -- as already attempted in~\cite{Winter16} -- where both EM-laws and Kleisli-laws are ubiquitous. Another possible follow-up to the vision developed in Section~\ref{subsec:outputs} is to \emph{weaken algebras}. Indeed, as some EM-laws turn out to be algebras, some \emph{weak} EM-laws turn out to be so-called \emph{semi}algebras (as defined in~\cite{Garner20}). Dually, \emph{coweak} Kleisli-laws probably correspond to an interesting class of free algebras. Clarifying the whole picture will most likely enhance the still underdeveloped theory of \emph{coweak} distributive laws.

Iteration of standard distributive laws has been heavily exploited in the context of categories-as-monads~\cite{Zanasi15,Zanasi15b,Zanasi17,Zanasi18}. Phrasing our results in a $2$-categorical setting, as did Cheng for iterated distributive laws~\cite{Cheng11a}, would enable new case studies to arise in that area.

On another level, our key message is to advocate string diagrams to deal with categorical proofs in contexts where traditional equational or commutative-diagrammatic proofs become unreadable. String diagrams are often more convenient in such proofs, as they allow to take a step further by hiding part of the complexity in the graphical formalism. Nevertheless, they also have limitations. In the main result of Cheng~\cite[Theorem~1.6]{Cheng11a}, iteration is performed not only for three monads but for any finite number $n$ of monads: the $n = 3$ case does only initialise an induction. Generalising the $n = 3$ result to weak laws becomes reachable using string diagrams -- though some diagrams in the proof of Theorem~\ref{theo:comp_equal} already have substantial size. Even with a string-diagrammatic approach, generalizing further to $n$ monads may require tedious proofs. Our opinion is that ingenious string-diagrammatic reasoning significantly pushes back the -- fuzzy and subjective -- boundary between proofs that are intelligible and those that are not.

\bibliographystyle{entics}
\bibliography{ada_biblio}

\appendix 


\section{Proofs of Section~\ref{sec:laws}} \label{app:laws}

In this section we prove that $\wcomp{S}{\lambda}{T}$ is indeed a monad.
To prove the first unitality axiom, \textit{i.e.}, that $\mu^{\wcomp{S}{}{T}} \circ \eta^{\wcomp{S}{}{T}} K = 1_K$, we need equation~\eqref{eq:kappa_eta}, the equation $\iota \circ \pi = \kappa$, the retract equation $\pi \circ \iota = 1_K$, $\eta^S$ unitality and the retract equation again.

\begin{equation}
\tikzfig{A1_weak_composite_1}
\end{equation}

To prove the second unitality axiom, \textit{i.e.}, that $\mu^{\wcomp{S}{}{T}} \circ K \eta^{\wcomp{S}{}{T}} = 1_K$, we need equation~\eqref{eq:kappa_eta}, $\lambda$~\eqref{ax:etap} axiom, unitality of both $\eta^S$ and $\eta^T$, and the retract equation.

\begin{equation}
\tikzfig{A1_weak_composite_2}
\end{equation}

As for distributive laws, it is possible to prove, using only the~\eqref{ax:mup} and~\eqref{ax:mum} axioms, that $\theta \triangleq \mu^S\mu^T \circ S\lambda T$ is associative, \textit{i.e.}, $\theta \circ ST\theta = \theta \circ \theta ST$. To prove the associativity axiom of $\wcomp{S}{}{T}$, \textit{i.e.}, that $\mu^{\wcomp{S}{}{T}} \circ K \mu^{\wcomp{S}{}{T}} = \mu^{\wcomp{S}{}{T}} \circ \mu^{\wcomp{S}{}{T}} K$, we need equation~\eqref{eq:kappa_mu}, the retract equation, $\theta$ associativity, the retract equation again, and equation~\eqref{eq:kappa_mu} again. 

\begin{equation}
\tikzfig{A1_weak_composite_3}
\end{equation}

\section{Proofs of Section~\ref{sec:iteration}} \label{app:iteration}

In this section we provide the missing details from the proof sketch of Theorem~\ref{theo:comp_equal}. 

\subsection{Expressions of $\kappa^\phi$ and $\kappa^\psi$}

First, let us justify the expressions of $\kappa^\phi$ and $\kappa^\psi$ in equations~\eqref{eq:kappa_phi} and~\eqref{eq:kappa_psi}. Recall that the definition of the idempotent derived from a weak distributive law $\lambda \colon [S,T]$ is given in equation~\eqref{eq:kappa_def} and in the third diagram of~\eqref{eq:kappa_notations}.

Computing $\kappa^\phi$ is easy by using the definition of the idempotent $\kappa^\phi$, the~\eqref{ax:etam} diagram for $\tau$ and the definition of the idempotent $\kappa^\lambda$.
\begin{equation} 
\tikzfig{A1_kappa_phi}
\end{equation}

Computing $\kappa^\psi$ is longer because it contains the composite multiplication $\mu^{\wcomp{S}{}{T}}$. We use the definition of $\kappa^\psi$, Lemma~\ref{lem:second_comp}, the retract equation, equation~\eqref{eq:kappa_eta}, the~\eqref{ax:etam} axiom for $\tau$, definitions of $\kappa^\sigma$ and $\kappa^\lambda$, and the retract equation again.
\begin{equation}
\tikzfig{A1_kappa_psi}
\end{equation}

\subsection{Separate and Merge the Triple Functor}

Let us prove the four equations displayed in~\eqref{eq:separate_merge}, which we recall here:
\begin{equation}
    \tikzfig{A2_separate_merge}
\end{equation}
For once we may restate and prove these properties using plain equational reasoning -- they will nonetheless be applied in string-diagrammatic form in the next section. Equationally, the properties are as follows:
\begin{align} 
R \iota^\lambda \circ \iota^\psi = R \kappa^\lambda \circ \iota^\sigma T \circ \iota^\phi \label{eq:s1} \tag{S1} \\
\iota^\sigma T \circ \iota^\phi = \kappa^\sigma T \circ R \iota^\lambda \circ \iota^\psi \label{eq:s2} \tag{S2} \\
\pi^\psi \circ R \pi^\lambda = \pi^\phi \circ \pi^\sigma T \circ R\kappa^\lambda \label{eq:m1} \tag{M1} \\
\pi^\phi \circ \pi^\sigma T = \pi^\psi \circ R \pi^\lambda \circ \kappa^\sigma T \label{eq:m2} \tag{M2}
\end{align}
To prove~\eqref{eq:s1} and~\eqref{eq:m1}, recall that we defined $\iota^\psi = \beta \circ \iota^\phi$ and $\pi^\psi = \pi^\phi \circ \alpha$, where $\alpha = \pi^\sigma T \circ R\iota^\lambda$ and $\beta = R\pi^\lambda \circ \iota^\sigma T$. Hence 
\begin{align} 
R\iota^\lambda \circ \iota^\psi 
& = R\iota^\lambda \circ \beta \circ \iota^\phi \\ 
& = R\iota^\lambda \circ R\pi^\lambda \circ \iota^\sigma T \circ \iota^\phi \\ 
& = R\kappa^\lambda \circ \iota^\sigma T \circ \iota^\phi
\end{align}
and 
\begin{align} 
\pi^\psi \circ R \pi^\lambda 
& = \pi^\phi \circ \alpha \circ R\pi^\lambda \\ 
& = \pi^\phi \circ \pi^\sigma T \circ R\iota^\lambda \circ R\pi^\lambda \\ 
& = \pi^\phi \circ \pi^\sigma T \circ R\kappa^\lambda
\end{align}
To prove~\eqref{eq:s2} and~\eqref{eq:m2}, notice that $\iota^\phi = \kappa^\phi \circ \iota^\phi = \alpha \circ \beta \circ \iota^\phi = \alpha \circ \iota^\psi$, and similarly $\pi^\phi = \pi^\psi \circ \beta$. Hence 
\begin{align} 
\iota^\sigma T \circ \iota^\phi
& = \iota^\sigma T \circ \alpha \circ \iota^\psi \\ 
& = \iota^\sigma T \circ \pi^\sigma T \circ R\iota^\lambda \circ \iota^\psi \\
& = \kappa^\sigma T \circ R\iota^\lambda \circ \iota^\psi
\end{align}
and 
\begin{align} 
\pi^\phi \circ \pi^\sigma T 
& = \pi^\psi \circ \beta \circ \pi^\sigma T \\ 
& = \pi^\psi \circ R\pi^\lambda \circ \iota^\sigma T \circ \pi^\sigma T \\ 
& = \pi^\psi \circ R\pi^\lambda \circ \kappa^\sigma T
\end{align}

\subsection{Units and Multiplications Coincide} 

At this point, we already know that the weak distributive laws $\phi \colon [\wcomp{R}{}{S},T]$ and $\psi \colon [R,\wcomp{S}{}{T}]$ generate respectively a monad $\wcomp{(\wcomp{R}{}{S})}{}{T}$ and a monad $\wcomp{R}{}{(\wcomp{S}{}{T})}$ whose underlying functors coincide. It remains to prove that the units and multiplications coincide. Let us begin with units.

According to the general theory of weak distributive laws from Section~\ref{sec:laws}, the units of the monads generated by $\phi \colon [\wcomp{R}{}{S},T]$ and $\psi \colon [R,\wcomp{S}{}{T}]$ are respectively:
\begin{equation} 
\tikzfig{A2_unit_phi} 
\end{equation}
To prove that these are equal, we just need to use equation~\eqref{eq:kappa_eta} for $\kappa^\lambda$, and equation~\eqref{eq:m1}.
\begin{equation} 
\tikzfig{A2_unit_equal}
\end{equation}

The situation is more intricated for multiplications. According to the general theory of weak distributive laws from Section~\ref{sec:laws}, the multiplications of the monads generated by $\phi \colon [\wcomp{R}{}{S},T]$ and $\psi \colon [R,\wcomp{S}{}{T}]$ are respectively:
\begin{equation}
\tikzfig{A2_mult_both}
\end{equation}

To prove that these are equal, we will manipulate separately both of them until their string-diagrammatic representations coincide. The first multiplication can be transformed as follows, using equation~\eqref{eq:m2}, equation~\eqref{eq:kappa_mu}, the retract equation, Lemma~\eqref{lem:first_comp} and the retract equation again. 
\begin{equation} 
\tikzfig{A2_mult_phi}
\end{equation}

The second multiplication, in turn, transforms as follows, using equation~\eqref{eq:s1}, Lemma~\eqref{lem:second_comp}, the retract equation, equation~\eqref{eq:kappa_mu} and the retract equation again.
\begin{equation}
\tikzfig{A2_mult_psi}
\end{equation}
Hence both multiplications are equal. This achieves the proof that $\wcomp{(\wcomp{R}{}{S})}{}{T}$ and $\wcomp{R}{}{(\wcomp{S}{}{T})}$ coincide as monads.

\section{Proofs of Section~\ref{sec:conclusion}} \label{app:conclusion}

In this section we restate one implication of~\cite[Theorem~5]{Winter16} and prove it using string diagrams.

\begin{definition}
Let $F$ be a functor, $S$ and $T$ be monads, and $\comp{S}{}{T}$ be their composite monad with respect to some distributive law of type $[S,T]$.
A Winter-law is a natural transformation $\lambda \colon TF \nto FST$ such that 
 \begin{align}
    & \lambda \circ \eta^T F = F\eta^{\comp{S}{}{T}} \label{ax:etad} \tag{$\eta^*$} \\
    & \lambda \circ \mu^T F = F\mu^{\comp{S}{}{T}} \circ \lambda ST \circ T\lambda \label{ax:mud} \tag{$\mu^*$}
 \end{align}
\end{definition}

\begin{theorem}[Winter's theorem converse implication]
Let $F$ be a functor, $S$ and $T$ be monads, $\lambda^0 \colon [S,T]$ be a distributive law, $\lambda^1 \colon TF \nto FST$ be a Winter-law and $\lambda^2 \colon [F,S]$ be an EM-law. Assume the following coherence axiom is satisfied:
\begin{equation}
    F\mu^S T \circ FS\lambda^0 \circ \lambda^1 S \circ T\lambda^2 = F\mu^S T \circ \lambda^2 ST \circ S\lambda^1 \circ \lambda^0 F \label{eq:coh} \tag{coh}
\end{equation}
Let
\begin{equation}
\hat{\lambda} \triangleq
    \begin{tikzcd}
    STF \arrow[r, "S\lambda^1", Rightarrow] & SFST \arrow[r, "\lambda^2 ST", Rightarrow] & FSST \arrow[r, "F\mu^S T", Rightarrow] & FST
    \end{tikzcd}
\end{equation}
Then $\hat{\lambda} \colon [F,\comp{S}{}{T}]$ is an EM-law.
\end{theorem}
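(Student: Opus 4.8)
The plan is to verify directly that $\hat\lambda$ satisfies the two defining axioms of an EM-law, namely \eqref{ax:etam} and \eqref{ax:mum}, with respect to the composite monad $\comp{S}{}{T}$, whose unit is $\eta^S\eta^T$ and whose multiplication is $\mu^{\comp{S}{}{T}} = \mu^S\mu^T \circ S\lambda^0 T$. Throughout I would read $\hat\lambda = F\mu^S T \circ \lambda^2 ST \circ S\lambda^1$ as a three-node string diagram --- an application of $\lambda^1$, then of $\lambda^2$, then a merge of the two $S$-strings via $\mu^S$ --- so that all naturality steps become invisible and only the genuine rewrites remain.

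First I would settle the unit axiom \eqref{ax:etam}, that is, $\hat\lambda \circ (\eta^S\eta^T) F = F(\eta^S\eta^T)$. Writing the composite unit as $(\eta^S\eta^T)F = S\eta^T F \circ \eta^S F$, the idea is to push the $\eta^T$-unit up through $\lambda^1$ using the Winter-law axiom \eqref{ax:etad}, which replaces $\lambda^1 \circ \eta^T F$ by $F\eta^{\comp{S}{}{T}}$; then the $S$-unit law $\mu^S \circ S\eta^S = 1_S$ absorbs the top merge (after a routine naturality slide of $\lambda^2$); and finally the remaining $\eta^S$-unit is pushed through $\lambda^2$ using its axiom \eqref{ax:etam}, namely $\lambda^2 \circ \eta^S F = F\eta^S$. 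Collecting the two surviving units yields exactly $F(\eta^S\eta^T)$. This part is short and essentially free of bookkeeping.

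The substance of the theorem is the multiplication axiom \eqref{ax:mum}, which reads $\hat\lambda \circ \mu^{\comp{S}{}{T}} F = F\mu^{\comp{S}{}{T}} \circ \hat\lambda\, ST \circ ST\hat\lambda$, where $ST$ is the underlying functor of $\comp{S}{}{T}$. My plan is to expand both sides in terms of $\lambda^0$, $\lambda^1$, $\lambda^2$, $\mu^S$ and $\mu^T$, and then rewrite the left-hand side into the right-hand side using, in turn: the Winter-law axiom \eqref{ax:mud} to resolve the $\lambda^1$-crossings past $\mu^T$; the axiom \eqref{ax:mum} of the EM-law $\lambda^2$ to resolve the $\lambda^2$-crossings past $\mu^S$; the distributive-law axioms \eqref{ax:mup} and \eqref{ax:mum} of $\lambda^0$ to move the $\lambda^0$-crossing past the $S$- and $T$-multiplications; and, crucially, the coherence axiom \eqref{eq:coh} to interchange the order of the three laws in the middle of the diagram. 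In string-diagrammatic form the administrative naturality squares that bloated Winter's commutative diagram vanish, so I expect the computation to collapse to roughly eight genuine steps and to be carried out on a single line of diagrams.

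The main obstacle is precisely this reordering step. As in Cheng's Theorem~\ref{theo:cheng}, where the Yang-Baxter equation is what permits the three crossings to be permuted, here \eqref{eq:coh} is the hexagon-type coherence that allows the $\lambda^0$-, $\lambda^1$- and $\lambda^2$-strings to be slid past one another while respecting the $\mu^S$-merge. Identifying exactly where to invoke \eqref{eq:coh}, and bringing the two sides into a common normal form on either side of it, is the only non-routine part; this is the very difficulty that forced Winter to search for a tiling by automation. The claim underlying the string-diagrammatic approach is that the graphical calculus makes the correct sequence of rewrites transparent, after which verification is immediate.
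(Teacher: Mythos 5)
Your overall strategy coincides with the paper's: verify \eqref{ax:etam} and \eqref{ax:mum} for $\hat\lambda$ directly, in string-diagrammatic form, against the composite monad $\comp{S}{}{T}$ with unit $\eta^S\eta^T$ and multiplication $\mu^S\mu^T \circ S\lambda^0 T$. Your unit argument is complete and correct: it uses exactly the three ingredients of the paper's proof ($\lambda^1$'s axiom \eqref{ax:etad}, $\eta^S$-unitality absorbing the $\mu^S$-merge, and $\lambda^2$'s axiom \eqref{ax:etam}), merely applied in the reverse order, which is equally valid.

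The gap is in the multiplication axiom, which is the entire substance of the theorem. What you give there is a list of tools and a statement of intent (``expand both sides and rewrite\dots I expect the computation to collapse to roughly eight genuine steps''), and you explicitly set aside ``the only non-routine part'', namely where \eqref{eq:coh} is invoked and how the two sides are normalised around it. But that deferred part \emph{is} the proof; asserting that the graphical calculus will make the rewrite sequence transparent is not a substitute for exhibiting it. For comparison, the paper's proof is the concrete eight-step chain: $\lambda^1$'s \eqref{ax:mud}, then $\lambda^2$'s \eqref{ax:mum}, then $\mu^S$-associativity twice, then \eqref{eq:coh}, then $\mu^S$-associativity twice again, then $\lambda^0$'s \eqref{ax:mup}. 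Two structural features of this chain are not anticipated by your sketch. First, applying \eqref{ax:mud} introduces a \emph{second} $\lambda^0$-crossing (hidden inside the $F\mu^{\comp{S}{}{T}}$ on its right-hand side), so after your first two rewrites the left side carries two $\lambda^0$'s while the target carries only one; the endgame is precisely to fuse them, using $\lambda^0$'s \eqref{ax:mup} read as merging the two $S$-strings, and this happens \emph{after} \eqref{eq:coh}, not before it as your literal ordering suggests. Second, $\lambda^0$'s \eqref{ax:mum}, which you list among the needed tools, is never used. As it stands, your proposal records the difficulty of the theorem rather than resolving it.
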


\begin{proof}
First, we express all available data using string diagrams.
\begin{equation}
    \tikzfig{W_data}
\end{equation}
\begin{equation}
    \tikzfig{W_data_2}
\end{equation}
Axioms specific to Winter's framework are displayed below:
\begin{equation}
    \tikzfig{W_axioms}
\end{equation}
By definition, the natural transformation $\hat{\lambda}$ is as follows:
\begin{equation}
    \tikzfig{W_lambda}
\end{equation}
The~\eqref{ax:etam} axiom for $\hat{\lambda}$ can be proved using $\lambda^2$~\eqref{ax:etam} axiom, $\eta^S$ unitality and $\lambda^1$~\eqref{ax:etad} axiom.
\begin{equation}
    \tikzfig{W_unit}
\end{equation}
The~\eqref{ax:mum} axiom for $\hat{\lambda}$ can be proved using $\lambda^1$~\eqref{ax:mud} axiom, $\lambda^2$~\eqref{ax:mum} axiom, $\mu^S$ associativity twice, equation~\eqref{eq:coh}, $\mu^S$ associativity twice again, and $\lambda^0$~\eqref{ax:mup} axiom.
\begin{equation}
    \tikzfig{W_computation}
\end{equation}
\end{proof}

\end{document}